\newtheorem{theorem}{Theorem}
\newtheorem{lemma}{Lemma}
\newtheorem{claim}{Claim}
\title{A Matroid Generalization of the Super-Stable Matching Problem}
\author{Naoyuki Kamiyama%
\thanks{Institute of Mathematics for Industry, Kyushu University, Fukuoka, Japan.}
\thanks{JST, PRESTO, Kawaguchi, Japan.}
\thanks{{\ttfamily kamiyama@imi.kyushu-u.ac.jp}}
}
\date{}
\begin{document}

\maketitle

\begin{abstract}
A super-stable matching, which was 
introduced by Irving, is a solution concept in a variant of 
the stable matching problem in which the preferences may contain ties. 
Irving proposed a polynomial-time algorithm for the problem of 
finding a super-stable matching if a super-stable matching exists. 
In this paper, we consider a matroid generalization of a super-stable matching. 
We call our generalization of a super-stable matching 
a super-stable common independent set. 
This can be considered as a generalization of the matroid generalization of 
a stable matching for strict preferences proposed by Fleiner. 
We propose a polynomial-time algorithm for the problem of 
finding a super-stable common independent set 
if a super-stable common independent set exists.
\end{abstract} 

\section{Introduction}

The stable matching problem, which was introduced by 
Gale and Shapley~\cite{Gs62}, is one of the most central topics
in the study of matching under preferences. 
The aim of this paper is to unify the following two directions of generalization 
of the stable matching problem. 

The first direction is generalization of preferences. 
More concretely, in this generalization, we allow 
the preferences contain ties. 
That is, some agent may be indifferent between some potential partners. 
A super-stable matching, which was 
introduced by Irving~\cite{I94}, is a solution concept in this generalization.
This stability concept guarantees that there does not exist 
an unmatched pair $\{a,b\}$ such that $a$ (resp., $b$) 
prefers $b$ (resp., $a$) to the current partner, 
or is indifferent between $b$ (resp., $a$) and the current partner.
Irving~\cite{I94} proposed a polynomial-time algorithm for the problem of 
finding a super-stable matching if a super-stable matching exists. 

The second direction is generalization of matchings.
In the stable matching problem, we usually consider a matching in 
a bipartite graph. 
In this generalization, we generalize a matching in a bipartite graph to a 
common independent set of two matroids. 
Fleiner~\cite{F03} introduced a matroid generalization of 
the stable matching problem with strict preferences, and 
proved that 
there always exists a stable solution in this generalization and 
we can find a stable solution in polynomial time if we are given 
polynomial-time independence oracles for the matroids. 

In this paper, we consider a matroid generalization of a super-stable matching. 
We call this generalization of a super-stable matching 
a super-stable common independent set, and we 
consider the problem of finding a super-stable common 
independent set if a super-stable common 
independent set exists. 
The contribution of this paper is a polynomial-time algorithm for 
this problem.
Although special cases of this problem were considered in \cite{K15,K19}, 
the polynomial-time solvability of this problem
has been open.
We affirmatively settle this question. 
Our algorithm can be considered as a generalization of 
the algorithm proposed in \cite{OM20+}
for finding a super-stable matching in 
the student-project allocation problem with ties, 
which is a special case of our problem 
(see Appendix~\ref{appendix:spa}).

\subsection{Related work} 

Irving~\cite{I94} introduced a super-stable matching 
in the stable matching problem with ties
(see, e.g., \cite{IM08} and \cite[Chapter~3]{M13} 
for a survey of the stable matching problem with ties). 
One of the most notable differences between a super-stable matching 
and a stable matching in the 
stable matching problem with strict preferences
is that there does not necessarily exist a super-stable matching~\cite{I94}. 
In the one-to-one setting, 
Irving~\cite{I94} 
proposed a polynomial-time algorithm for 
finding a super-stable matching 
if a super-stable matching exists
(see also \cite{M99}). 
In the many-to-one setting, 
Irving, Manlove, and Scott~\cite{IMS00} proposed 
a polynomial-time algorithm for 
finding a super-stable matching
if a super-stable matching exists.
Scott~\cite{S05} considered 
a super-stable matching in the many-to-many setting.
Furthermore, 
Olaosebikan and Manlove~\cite{OM20+}
considered a super-stable matching in the 
student-project allocation problem with ties. 
As a special case, the situation in which we are given a master list and 
the preference lists of agents are 
derived from this master list has been considered. 
In the one-to-one setting with master lists, 
Irving, Manlove, and Scott~\cite{IMS08} gave 
a simple polynomial-time algorithm for finding a 
super-stable matching
if a super-stable matching exists. 
In the many-to-one setting with master lists, 
O'Malley~\cite{O07} gave a
polynomial-time algorithm for 
finding a super-stable matching 
if a super-stable matching exists.

Fleiner~\cite{F03} considered a matroid generalization of the stable matching 
problem with strict preferences. 
Matroid approaches to the stable matching problem have been extensively 
studied (see, e.g., \cite{FK16,FT07,IY16,K15,KTY18,MY15,Y17}). 
Special cases of the problem considered in this paper were considered in \cite{K15,K19}.
More concretely, in \cite{K15,K19}, the situation in which 
we are given a master list was considered. 
Furthermore, a related problem was considered in \cite{K20}. 

\section{Preliminaries} 

For each set $X$ and each element $u$, we define 
$X + u := X \cup \{u\}$ and 
$X - u := X \setminus \{u\}$. 

An ordered pair ${\bf M} = (U,\mathcal{I})$ of a finite set $U$ and 
a non-empty family $\mathcal{I}$ of subsets of $U$ is called a 
{\em matroid} if
for every pair of subsets $I,J \subseteq U$,  
the following conditions are satisfied.
\begin{description}
\setlength{\parskip}{0mm} 
\setlength{\itemsep}{1mm} 
\item[(I1)]
If $J \in \mathcal{I}$ and $I \subseteq J$, then $I \in \mathcal{I}$.
\item[(I2)]
If $I,J \in \mathcal{I}$ and $|I| < |J|$, then there exists an element $u \in J \setminus I$ 
such that $I + u \in \mathcal{I}$.
\end{description}

Assume that we are given a matroid ${\bf M} = (U, \mathcal{I})$. 
A member $I \in \mathcal{I}$ is called an {\em independent set} of ${\bf M}$. 
A subset $B \subseteq U$ is called a {\em base} if 
$B$ is an inclusion-wise maximal independent set of ${\bf M}$. 
Notice that the condition (I2) implies that 
all the bases of ${\bf M}$ have the same size. 
A subset $I \subseteq U$ such that $I \notin \mathcal{I}$ is called a 
{\em dependent set} of ${\bf M}$. 
A subset $C \subseteq U$ is called a {\em circuit} of ${\bf M}$ if 
$C$ is an inclusion-wise minimal dependent set of ${\bf M}$. 
Notice that the definition of a circuit implies that 
for every pair of distinct circuits $C_1,C_2$ of ${\bf M}$, 
$C_1 \setminus C_2 \neq \emptyset$ and 
$C_2 \setminus C_1 \neq \emptyset$.  

\begin{lemma}[See, e.g., {\cite[Lemma~1.1.3]{O11}}] \label{lemma:elimination}
Assume that we are given a matroid ${\bf M}=(U,\mathcal{I})$ and 
distinct 
circuits $C_1,C_2$ of ${\bf M}$ such that 
$C_1 \cap C_2 \neq \emptyset$. 
Then for every element $u \in C_1 \cap C_2$,
there exists a circuit $C$ of ${\bf M}$ 
such that $C \subseteq (C_1 \cup C_2) - u$. 
\end{lemma}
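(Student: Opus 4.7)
The plan is to argue by contradiction. Suppose $D := (C_1 \cup C_2) - u$ contains no circuit of $\mathbf{M}$. Since a circuit is by definition a minimal dependent set, $D$ itself cannot then be dependent, so $D \in \mathcal{I}$. The strategy is to pin down the rank of $C_1 \cup C_2$ in two incompatible ways and derive a contradiction.

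First I would introduce the rank function $r(X) := \max\{|I| : I \subseteq X,\, I \in \mathcal{I}\}$ and record two standard consequences of (I1) and (I2): monotonicity ($A \subseteq B \Rightarrow r(A) \leq r(B)$, which is immediate from (I1)), and submodularity,
\[
r(A \cup B) + r(A \cap B) \;\leq\; r(A) + r(B).
\]
Submodularity is proved by extending a maximum independent subset $J$ of $A \cap B$ via (I2) to a maximum independent subset $J'$ of $A \cup B$; then $J' \cap A$ and $J' \cap B$ are independent subsets of $A$ and $B$ respectively, their union equals $J'$, their intersection contains (hence has size at least) $|J|$, and inclusion--exclusion yields the inequality.

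Next I would compute the ranks in play. Each $C_i$ is a circuit, so it is dependent while every proper subset of it is independent, giving $r(C_i) = |C_i| - 1$. The intersection $C_1 \cap C_2$ is a proper subset of the circuit $C_1$, because the remark preceding the lemma guarantees $C_1 \setminus C_2 \neq \emptyset$, so it is independent and $r(C_1 \cap C_2) = |C_1 \cap C_2|$. Submodularity then gives
\[
r(C_1 \cup C_2) \;\leq\; r(C_1) + r(C_2) - r(C_1 \cap C_2) \;=\; |C_1 \cup C_2| - 2.
\]
On the other hand, $D \in \mathcal{I}$ forces $r(D) = |D| = |C_1 \cup C_2| - 1$, and monotonicity gives $r(C_1 \cup C_2) \geq r(D) = |C_1 \cup C_2| - 1$, contradicting the submodular upper bound. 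Thus $D$ must contain a dependent subset, hence a circuit $C$, which is the desired $C \subseteq (C_1 \cup C_2) - u$.

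The main obstacle is the submodularity step, since it is the only genuinely non-routine input and is not stated in the preliminaries; everything else is either a direct consequence of the definition of a circuit or a routine cardinality computation. A rank-free alternative is possible (pick $v \in C_2 \setminus C_1$, note $C_1 \subseteq (C_1 \cup C_2) - v$ is dependent, and compare with the putative independent set $D$ via repeated applications of (I2)), but that route essentially re-encodes the same submodular content and seems less transparent.
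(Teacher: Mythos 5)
Your proof is correct. The paper does not actually prove this lemma; it cites it directly from Oxley's \emph{Matroid Theory} (Lemma~1.1.3), so there is no in-paper argument to compare against. Your rank/submodularity route is the standard textbook proof: the inequality $r(C_1\cup C_2)\leq |C_1\cup C_2|-2$ from submodularity, against $r(C_1\cup C_2)\geq |C_1\cup C_2|-1$ from the assumed independence of $(C_1\cup C_2)-u$, is exactly the contradiction Oxley exploits. The one place you lean on an unstated fact is that a non-independent finite set must contain a circuit (used to pass from ``$D$ contains no circuit'' to ``$D\in\mathcal{I}$''); this follows by (I1) and finiteness, and is worth a half-sentence. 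You also correctly invoke the paper's preliminary remark that distinct circuits satisfy $C_1\setminus C_2\neq\emptyset$, which is what makes $C_1\cap C_2$ a proper, hence independent, subset of $C_1$. Your sketch of the submodularity proof (extend a maximum independent subset $J$ of $A\cap B$ to a maximal independent subset $J'$ of $A\cup B$, then count) is sound; the intersection bound you need is $|J'\cap A\cap B|\geq |J|$, which you have from $J\subseteq J'$, and equality is automatic by maximality of $J$. The rank-free alternative you mention at the end (pick $f\in C_2\setminus C_1$, extend $C_2-f$ to a maximal independent subset of $C_1\cup C_2$, and compare sizes) is closer to the argument Oxley actually writes out, but the two are equivalent in content, as you observe.
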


Assume that we are given a matroid ${\bf M}=(U,\mathcal{I})$ and 
an independent set $I$ of ${\bf M}$.
Define 
\begin{equation*}
{\sf span}_{{\bf M}}(I) := 
\{u \in U \setminus I \mid I + u \notin \mathcal{I}\}.
\end{equation*} 
It is not difficult to see that 
for every element $u \in {\sf span}_{{\bf M}}(I)$, 
$I + u$ contains a circuit of ${\bf M}$ as  
a subset, and (I1) implies that $u$ belongs to this circuit.
Furthermore, Lemma~\ref{lemma:elimination}
implies that  such a circuit is uniquely determined. 
We call this circuit the {\it fundamental circuit} of $u$ 
for $I$ and ${\bf M}$, and we denote by 
${\sf C}_{{\bf M}}(u,I)$ this circuit.
It is well known (see, e.g., \cite[p.20, Exercise 5]{O11}) that 
for every element $u \in {\sf span}_{{\bf M}}(I)$, 
\begin{equation*}
{\sf C}_{{\bf M}}(u,I) 
= \{u^{\prime} \in I  + u \mid I + u - u^{\prime} \in \mathcal{I}\}.
\end{equation*} 
For each element $u \in {\sf span}_{{\bf M}}(I)$, 
we define ${\sf D}_{{\bf M}}(u,I) := {\sf C}_{{\bf M}}(u,I) - u$. 

\subsection{Problem formulation}

In this paper, we are given matroids ${\bf M}_{\rm D} = (E, \mathcal{I}_{\rm D})$ and 
${\bf M}_{\rm H} = (E, \mathcal{I}_{\rm H})$ such that 
for every element $e \in E$, $\{e\} \in \mathcal{I}_{\rm D} \cap \mathcal{I}_{\rm H}$. 
Furthermore, throughout this paper, we are given  
transitive and complete\footnote{%
For every symbol ${\rm S} \in \{{\rm D}, {\rm H}\}$ and 
every pair of elements $e,f \in E$, at least one of $e \succsim_{\rm S} f$ and 
$f \succsim_{\rm S} e$ holds.} binary relations 
$\succsim_{\rm D}$ and $\succsim_{\rm H}$ on $E$. 
A subset $I \subseteq E$ is called a {\em common independent set} of 
${\bf M}_{\rm D}$ and ${\bf M}_{\rm H}$ if
$I \in \mathcal{I}_{\rm D} \cap \mathcal{I}_{\rm H}$. 
For each symbol ${\rm S} \in \{{\rm D},{\rm H}\}$ and 
each pair of elements $e,f \in E$, 
if $e \succsim_{\rm S} f$ and 
$f \not\succsim_{\rm S} e$, then 
we write $e \succ_{\rm S} f$. 

Assume that we are given a common independent set $I$ of 
${\bf M}_{\rm D}$ and ${\bf M}_{\rm H}$. 
Then for each symbol ${\rm S} \in \{{\rm D},{\rm H}\}$, 
we define ${\bf dom}_{\rm S}(I)$ by 
\begin{equation*}
{\bf dom}_{\rm S}(I) := 
\{e \in {\sf span}_{{\bf M}_{\rm S}}(I) \mid 
\mbox{$f \succ_{\rm S} e$ for every element $f \in {\sf D}_{{\bf M}_{\rm S}}(e,I)$}\}. 
\end{equation*}
Then $I$ is said to be {\em super-stable} if 
\begin{equation*}
E \setminus I = {\bf dom}_{\rm D}(I) \cup {\bf dom}_{\rm H}(I).
\end{equation*} 
Then the goal of the {\em super-stable common independent set problem} 
is to determine 
whether there exists a super-stable common independent set of 
${\bf M}_{\rm D}$ and ${\bf M}_{\rm H}$.
Furthermore, if there exists a super-stable common independent set of 
${\bf M}_{\rm D}$ and ${\bf M}_{\rm H}$, then 
we find a super-stable common independent set of 
${\bf M}_{\rm D}$ and ${\bf M}_{\rm H}$. 

In this paper, 
we assume that for every symbol ${\rm S} \in \{{\rm D}, {\rm H}\}$ and 
every subset $I \subseteq E$, 
we can determine 
whether $I \in \mathcal{I}_{\rm S}$ in time bounded by a polynomial in $|E|$. 

\subsection{Basics of matroids} 

Throughout this subsection, we assume that 
we are given a matroid ${\bf M} = (U, \mathcal{I})$. 

The following lemma easily follows from 
Lemma~\ref{lemma:elimination}. 
For completeness, we give its proof. 

\begin{lemma} \label{lemma:circuit_subset}
Assume that we are given independent sets $I,J$ of 
${\bf M}$ such that $I \subseteq J$, and 
an element $u \in U \setminus J$.
Then if $u \in {\sf span}_{{\bf M}}(I)$, then 
$u \in {\sf span}_{{\bf M}}(J)$ and 
${\sf C}_{{\bf M}}(u,I) = {\sf C}_{{\bf M}}(u,J)$. 
\end{lemma}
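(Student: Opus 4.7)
The plan is to derive both conclusions from the uniqueness of fundamental circuits that was established (via Lemma~\ref{lemma:elimination}) just before the lemma statement.

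First, I would verify that $u \in {\sf span}_{{\bf M}}(J)$. Since $u \in {\sf span}_{{\bf M}}(I)$, we have $I + u \notin \mathcal{I}$. Because $I + u \subseteq J + u$, condition (I1) (applied contrapositively) forces $J + u \notin \mathcal{I}$. Combined with $u \in U \setminus J$, this yields $u \in {\sf span}_{{\bf M}}(J)$, so ${\sf C}_{{\bf M}}(u,J)$ is well defined.

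Next, I would identify ${\sf C}_{{\bf M}}(u,I)$ with ${\sf C}_{{\bf M}}(u,J)$ by pointing out that both are circuits contained in $J + u$. The set ${\sf C}_{{\bf M}}(u,I)$ is a circuit of ${\bf M}$ contained in $I + u$, which in turn is a subset of $J + u$; so ${\sf C}_{{\bf M}}(u,I)$ is a circuit contained in $J + u$. By the uniqueness observation preceding the lemma (the fundamental circuit of $u$ for $J$ is the unique circuit of ${\bf M}$ contained in $J + u$, an immediate consequence of Lemma~\ref{lemma:elimination} together with the fact that every such circuit must contain $u$ because $J \in \mathcal{I}$), we conclude ${\sf C}_{{\bf M}}(u,I) = {\sf C}_{{\bf M}}(u,J)$.

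There is no real obstacle here; the only thing one must be careful about is invoking the already-noted uniqueness of the fundamental circuit with respect to the larger set $J$, rather than trying to build an explicit bridge between the two circuits through circuit elimination directly. Framing the argument as ``${\sf C}_{{\bf M}}(u,I)$ is some circuit inside $J+u$, and the fundamental circuit for $J$ is \emph{the} circuit inside $J+u$'' keeps the proof to just a few lines.
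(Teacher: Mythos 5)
Your proposal is correct and takes essentially the same route as the paper: both establish $u \in {\sf span}_{\bf M}(J)$ from $I + u \subseteq J + u$, and both get the equality of circuits from uniqueness of the circuit contained in $J + u$, which comes from Lemma~\ref{lemma:elimination}. The only cosmetic difference is that you cite the uniqueness observation already recorded in the preliminaries, whereas the paper re-derives it on the spot by applying Lemma~\ref{lemma:elimination} directly to $C_I$ and $C_J$ and getting a contradiction with $J \in \mathcal{I}$.
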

\begin{proof}
Since ${\sf C}_{{\bf M}}(u,I) \subseteq J + u$, 
$u \in {\sf span}_{{\bf M}}(J)$. 
Define $C_I := {\sf C}_{{\bf M}}(u,I)$ and 
$C_J := {\sf C}_{{\bf M}}(u,J)$. 

Assume that 
$C_I \neq C_J$. 
Then since $u \in C_I \cap C_J$, 
Lemma~\ref{lemma:elimination} implies that 
there exists a circuit $C$ of ${\bf M}$ such that 
$C \subseteq (C_I \cup C_J) - u \subseteq J$.
However, this contradict the fact that 
$J \in \mathcal{I}$. 
\end{proof} 

The following stronger version of Lemma~\ref{lemma:elimination} is known. 

\begin{lemma}[See, e.g., {\cite[p.~15, Exercise~14]{O11}}] \label{lemma:strong_elimination}
Assume that we are given distinct 
circuits $C_1,C_2$ of ${\bf M}$ such that 
$C_1 \cap C_2 \neq \emptyset$. 
Then for every element $u \in C_1 \cap C_2$ and 
every element $w \in C_1 \setminus C_2$, 
there exists a circuit $C$ of ${\bf M}$ 
such that $w \in C \subseteq (C_1 \cup C_2) - u$. 
\end{lemma}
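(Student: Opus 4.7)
The plan is to exhibit the desired circuit as a fundamental circuit ${\sf C}_{{\bf M}}(w, J)$ for an independent set $J \subseteq (C_1 \cup C_2) - \{u, w\}$ chosen to be a maximum-size independent subset of $(C_1 \cup C_2) - u$ that avoids $w$. For such a $J$, the element $w$ lies in $(C_1 \cup C_2) - u$ but outside $J$, so maximality forces $J + w \notin \mathcal{I}$, i.e.\ $w \in {\sf span}_{{\bf M}}(J)$; then ${\sf C}_{{\bf M}}(w, J)$ is a circuit containing $w$ and contained in $J + w \subseteq (C_1 \cup C_2) - u$, which is exactly what is required.

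Hence the whole problem reduces to showing that some maximum-size independent subset of $(C_1 \cup C_2) - u$ avoids $w$, equivalently, that the largest size of an independent subset of $(C_1 \cup C_2) - \{u, w\}$ equals that of $(C_1 \cup C_2) - u$. One direction is trivial by monotonicity; the task is the reverse inequality, namely that removing $w$ does not shrink the maximum independent subset.

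I would establish this equality by chaining three applications of the following auxiliary fact: if $I \subseteq X$ is independent, $v \notin X$, and $v \in {\sf span}_{{\bf M}}(I)$, then $X$ and $X + v$ admit independent subsets of the same maximum size. The fact itself is proven by extending $I$ inside $X$ to a maximal independent set $B$ (using (I2)), invoking Lemma~\ref{lemma:circuit_subset} to transfer $v \in {\sf span}_{{\bf M}}(I)$ to $v \in {\sf span}_{{\bf M}}(B)$ so that $B + v$ is dependent, and then using (I2) to rule out any strictly larger independent set $L \subseteq X + v$: any augmenting element of $L$ over $B$ would have to be either $v$ itself (contradicting $v \in {\sf span}_{{\bf M}}(B)$) or an element of $X \setminus B$ already spanned by $B$ (contradicting maximality of $B$ in $X$). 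Applying this fact successively with $(X, I, v) = ((C_1 \cup C_2) - \{u, w\}, C_2 - u, u)$, then $((C_1 \cup C_2) - w, C_1 - w, w)$, then $((C_1 \cup C_2) - u, C_2 - u, u)$ yields that $(C_1 \cup C_2) - \{u, w\}$, $(C_1 \cup C_2) - w$, $C_1 \cup C_2$, and $(C_1 \cup C_2) - u$ all admit independent subsets of the same maximum size, delivering the needed equality.

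The main obstacle is packaging the auxiliary fact within the span/fundamental-circuit language of the paper, without recourse to a rank or closure operator. Once this fact is in hand, the three invocations are routine: $C_2 - u$ is independent and spans $u$ via the fundamental circuit $C_2$, and $C_1 - w$ is independent and spans $w$ via the fundamental circuit $C_1$, each a direct consequence of the definitions together with Lemma~\ref{lemma:elimination}.
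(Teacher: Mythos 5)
The paper does not prove Lemma~\ref{lemma:strong_elimination}; it cites it directly from Oxley, so there is no in-paper proof to compare against. Your proposal, however, is a correct and self-contained proof, essentially the standard rank argument recast so as to avoid introducing a rank or closure operator: you reduce the claim to the equality of the maximum sizes of independent subsets of $(C_1 \cup C_2)-u$ and of $(C_1 \cup C_2)-\{u,w\}$, obtain that equality by three applications of the ``adding a spanned element does not increase the maximum independent size'' fact, and then extract the desired circuit as a fundamental circuit ${\sf C}_{\bf M}(w,J)$. All the hypotheses of the auxiliary fact and of Lemma~\ref{lemma:circuit_subset} are satisfied at each of the three invocations (in particular $w\notin C_2$ ensures $C_2-u\subseteq (C_1\cup C_2)-\{u,w\}$, and $v\notin X$ holds in each case), the proof of the auxiliary fact via extending $I$ to a maximal independent $B\subseteq X$ and then using (I2) against any larger $L\subseteq X+v$ is sound, and the final step that maximality of $J$ forces $J+w\notin\mathcal{I}$ is correct. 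The only cosmetic point is that the chain could be shortened to two applications (compare $(C_1\cup C_2)-\{u,w\}$ with $(C_1\cup C_2)-w$ via $v=u$, and $(C_1\cup C_2)-u$ with $C_1\cup C_2$ via $v=u$ again, then observe $(C_1\cup C_2)-w$ and $C_1\cup C_2$ must agree because an independent set in the latter of size exceeding the rank of the former must contain $w$, contradicting $w\in{\sf span}(C_1-w)$), but your three-step chain is equally valid.
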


The following lemma easily follows from 
Lemma~\ref{lemma:strong_elimination}. 
For completeness, we give its proof. 

\begin{lemma} \label{lemma:circuit_union}
Assume that we are given 
circuits $C, C_1,C_2,\ldots,C_{\ell}$ of ${\bf M}$.
Furthermore, we are given 
distinct elements $u_1,u_2,\ldots,u_{\ell}, w \in U$ satisfying the following 
conditions. 
\begin{description}
\setlength{\parskip}{0mm} 
\setlength{\itemsep}{1mm} 
\item[(U1)]
$u_i \in C \cap C_i$ holds 
for every integer $i \in \{1,2,\ldots,{\ell}\}$. 
\item[(U2)]
$u_{i} \notin C_{j}$ holds for every 
pair of distinct integers $i,j \in \{1,2,\ldots,{\ell}\}$. 
\item[(U3)]
$w \in C \setminus (C_1 \cup C_2 \cup \cdots \cup C_{\ell})$.  
\end{description}
Then there exists a circuit $C^{\prime}$ of ${\bf M}$ 
such that 
\begin{equation*}
w \in C^{\prime} \subseteq \Big(C \cup \Big(\bigcup_{i=1}^{\ell}C_i\Big)\Big) 
\setminus \{u_1,u_2,\ldots,u_{\ell}\}.
\end{equation*} 
\end{lemma}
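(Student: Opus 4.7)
The plan is induction on $\ell$, with Lemma~\ref{lemma:strong_elimination} playing the central role at both the base case and the inductive step. The base case $\ell = 1$ is immediate: (U1)--(U3) say precisely that $C$ and $C_1$ are distinct circuits meeting at $u_1$ with $w \in C \setminus C_1$, so Lemma~\ref{lemma:strong_elimination} applied to $C, C_1, u_1, w$ supplies the required $C'$.

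For the inductive step, I would first apply the inductive hypothesis to $C, C_1, \ldots, C_{\ell-1}$ together with the elements $u_1, \ldots, u_{\ell-1}, w$ (all three conditions are inherited, since (U3) still holds because $C_\ell$ was dropped from the union) to obtain a circuit $C^*$ with $w \in C^* \subseteq (C \cup C_1 \cup \cdots \cup C_{\ell-1}) \setminus \{u_1, \ldots, u_{\ell-1}\}$. Then I would split on whether $u_\ell$ survives in $C^*$. If $u_\ell \notin C^*$, I would simply take $C' := C^*$: it already avoids every $u_i$ and sits inside the desired union. If $u_\ell \in C^*$, I would apply Lemma~\ref{lemma:strong_elimination} once more, this time to the circuits $C^*$ and $C_\ell$ (distinct because $w \in C^*$ while $w \notin C_\ell$ by (U3)) at their common element $u_\ell$, keeping $w$; this yields a circuit $C' \ni w$ contained in $(C^* \cup C_\ell) - u_\ell \subseteq (C \cup C_1 \cup \cdots \cup C_\ell) - u_\ell$. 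To verify that $C'$ still avoids $u_i$ for every $i < \ell$, I would combine $u_i \notin C^*$ (from the inductive output) with $u_i \notin C_\ell$ (from (U2)), so $u_i \notin C^* \cup C_\ell \supseteq C'$.

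The step I expect to be the main obstacle is precisely this case split, since a priori there is no reason $u_\ell$ has to survive inside $C^*$, and the second invocation of strong elimination only makes sense in the survival case. Clause (U2) is essential in the survival case: without it, eliminating $u_\ell$ via $C^* \cup C_\ell$ could accidentally reintroduce one of the $u_i$ with $i < \ell$ that had already been cleaned up by the inductive hypothesis.
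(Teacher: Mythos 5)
Your proof is correct and follows essentially the same approach as the paper: the paper phrases it as an inner induction on $x \in \{0,1,\ldots,\ell\}$ (with trivial base case $C' = C$ at $x = 0$) rather than an outer induction on $\ell$, but the inductive step — case split on whether $u_z$ survives in the intermediate circuit, apply Lemma~\ref{lemma:strong_elimination} when it does, and invoke (U2) to confirm no previously eliminated $u_i$ reappears — is identical to yours. The only cosmetic difference is your starting point at $\ell = 1$ with one explicit application of strong elimination versus the paper's vacuous $x = 0$ base case.
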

\begin{proof}
We prove that for every integer $x \in \{0,1,\ldots,\ell\}$,  
\begin{equation} \label{eq1:lemma:circuit_union}
\mbox{there exists a circuit $C^{\prime}$ of ${\bf M}$ 
such that } w \in C^{\prime} \subseteq \Big(C \cup \Big(\bigcup_{i=1}^{x}C_i\Big)\Big) 
\setminus \{u_1,u_2,\ldots,u_{x}\}
\end{equation}
by induction on $x$. 
If $x = 0$, then since we can take $C$ as $C^{\prime}$, 
\eqref{eq1:lemma:circuit_union} holds. 

Let $z$ be an integer in $\{1,2,\ldots,\ell\}$. 
Assume that \eqref{eq1:lemma:circuit_union} holds
when $x = z -1$. 
Let $C^{\prime}$ be a circuit of ${\bf M}$ 
satisfying \eqref{eq1:lemma:circuit_union} when $x = z -1$. 
If $u_z \notin C^{\prime}$, then 
$C^{\prime}$ satisfies \eqref{eq1:lemma:circuit_union} when $x = z$. 
Thus, we assume that $u_z \in C^{\prime}$. 
This implies that $u_z \in C_z \cap C^{\prime}$. 
Furthermore, $w \in C^{\prime} \setminus C_z$. 
Thus, Lemma~\ref{lemma:strong_elimination} implies that 
there exists a circuit $C^{\circ}$ of ${\bf M}$ such that 
$w \in C^{\circ} \subseteq (C^{\prime} \cup C_z) - u_z$. 
Furthermore, (U2) implies that 
$u_1,u_2,\ldots,u_{z-1} \notin C^{\circ}$. 
These imply that 
$C^{\circ}$ satisfies \eqref{eq1:lemma:circuit_union} when 
$x = z$. 
\end{proof} 

Assume that we are given a subset $X \subseteq U$.
Define 
\begin{equation*}
\mathcal{I}|X := \{I \subseteq X \mid I \in \mathcal{I}\}, \ \ \ 
{\bf M}|X := (X, \mathcal{I}|X).
\end{equation*} 
Then it is known (see, e.g., \cite[p.20]{O11}) that 
${\bf M}|X$ is a matroid. 
Define ${\sf r}_{{\bf M}}(X)$ as the size of a base of ${\bf M}|X$. 
For each subset $I \subseteq U \setminus X$, 
we define 
\begin{equation*}
{\sf p}_{{\bf M}}(I; X) := {\sf r}_{{\bf M}}(I \cup X) - {\sf r}_{{\bf M}}(X).
\end{equation*}
Define 
\begin{equation*}
\mathcal{I}/X := \{I \subseteq U \setminus X \mid {\sf p}_{{\bf M}}(I; X) = |I|\}, \ \ \ 
{\bf M}/X := (U \setminus X, \mathcal{I}/X).
\end{equation*}  
Then it is known (see, e.g., \cite[Proposition~3.1.6]{O11}) 
that ${\bf M}/X$ is a matroid.

\begin{lemma}[{See, e.g., \cite[Proposition~3.1.7]{O11}}] \label{lemma:contraction}
Assume that we are given a subset $X \subseteq U$ and a base $B$ of ${\bf M}|X$. 
Then for every subset $I \subseteq U \setminus X$, 
$I$ is an independent set {\normalfont (}resp., a base{\normalfont )} of ${\bf M}/ X$
if and only if 
 $I \cup B$ is an independent set {\normalfont (}resp., a base{\normalfont )} of ${\bf M}$. 
\end{lemma}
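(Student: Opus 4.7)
The plan is to first establish the equivalence for independent sets by identifying ${\sf r}_{{\bf M}}(I \cup X)$ with the size of a concrete maximal independent subset of $I \cup X$ that contains $B$, and then to upgrade the result to bases through a short maximality argument. Throughout, I will use the standard consequence of (I2) that every maximal independent subset of a fixed $S \subseteq U$ has size ${\sf r}_{{\bf M}}(S)$, together with the disjointness $I \cap X = \emptyset$ (which holds because $I \subseteq U \setminus X$).

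For the direction $I \cup B \in \mathcal{I} \Rightarrow I \in \mathcal{I}/X$: since $B$ is a base of ${\bf M}|X$, for every $x \in X \setminus B$ we have $B + x \notin \mathcal{I}$, and (I1) then forces $I \cup B + x \notin \mathcal{I}$. Thus $I \cup B$ is a maximal independent subset of $I \cup X$, whence ${\sf r}_{{\bf M}}(I \cup X) = |I| + |B| = |I| + {\sf r}_{{\bf M}}(X)$; that is, ${\sf p}_{{\bf M}}(I;X) = |I|$.

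For the converse $I \in \mathcal{I}/X \Rightarrow I \cup B \in \mathcal{I}$: starting from the independent set $B$, extend it to a maximal independent subset $B^{\prime}$ of $I \cup X$. By the hypothesis, $|B^{\prime}| = {\sf r}_{{\bf M}}(I \cup X) = |I| + |B|$. Now $B^{\prime} \cap X$ is independent, contains $B$, and lies in $X$, so the maximality of $B$ in ${\bf M}|X$ forces $B^{\prime} \cap X = B$; hence $B^{\prime} \setminus B \subseteq I$ has size $|I|$, and so $B^{\prime} = I \cup B \in \mathcal{I}$. For the base assertion, the forward implication is immediate: if $I \cup B$ is a base of ${\bf M}$ and $I \subsetneq I^{\prime} \in \mathcal{I}/X$, then $I^{\prime} \cup B \in \mathcal{I}$ properly contains $I \cup B$, a contradiction. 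For the reverse, assume $I$ is a base of ${\bf M}/X$ and suppose some $v \in U \setminus (I \cup B)$ satisfies $I \cup B + v \in \mathcal{I}$; split on cases: $v \in X$ gives $B + v \in \mathcal{I}$ by (I1), contradicting that $B$ is a base of ${\bf M}|X$, while $v \in U \setminus X$ gives $I + v \in \mathcal{I}/X$ by the independent-set equivalence, contradicting the maximality of $I$ in ${\bf M}/X$.

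The main obstacle is the converse direction for independent sets, where the hypothesis is purely numerical (a rank equality) yet the conclusion is structural (namely, that the specific set $I \cup B$ is independent). The key move is to produce a maximal independent subset of $I \cup X$ \emph{through} the given base $B$; the rank hypothesis then supplies the correct size, and the rigidity of $B$ as a base of ${\bf M}|X$ pins down its intersection with $X$, forcing the remaining elements to be exactly $I$.
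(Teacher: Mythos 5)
The paper does not prove this lemma itself; it cites it to Oxley's book (Proposition~3.1.7), so there is no internal proof to compare against. Your argument is a correct, self-contained proof working directly from the definition of $\mathcal{I}/X$ via the rank function ${\sf p}_{{\bf M}}(I;X) = {\sf r}_{{\bf M}}(I\cup X) - {\sf r}_{{\bf M}}(X)$: the forward direction correctly identifies $I \cup B$ as a maximal independent subset of $I \cup X$ and reads off the rank, and the converse cleanly uses the rigidity of $B$ as a base of ${\bf M}|X$ to pin down $B'\cap X = B$ and then counts to conclude $B' = I \cup B$. The base case is then handled by a straightforward maximality argument in both directions, with the $v \in X$ versus $v \in U\setminus X$ case split being exactly what is needed. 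No gaps.
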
 

\begin{lemma}[{See, e.g., \cite[Proposition~3.1.25]{O11}}] \label{lemma:minor}
For every pair of disjoint subsets $X,Y \subseteq U$, 
\begin{equation*}
({\bf M}/X) / Y = {\bf M}/ (X \cup Y), \ \ \ 
({\bf M}/X) | Y = ({\bf M}| (X \cup Y)) / X.
\end{equation*}  
\end{lemma}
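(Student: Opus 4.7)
The plan is to prove each of the two identities by verifying that the matroids on both sides share the same ground set and the same family of independent sets; the main technical tool is Lemma~\ref{lemma:contraction}, which I would apply repeatedly to transfer independence across contractions.

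I would first fix a base $B$ of ${\bf M}|X$ and a base $C$ of $({\bf M}/X)|Y$, and then establish the auxiliary claim that $B \cup C$ is a base of ${\bf M}|(X \cup Y)$. Since $C$ is independent in ${\bf M}/X$, Lemma~\ref{lemma:contraction} gives $B \cup C \in \mathcal{I}$, hence $B \cup C \in \mathcal{I}|(X \cup Y)$. For maximality, suppose for contradiction that $B \cup C + e \in \mathcal{I}$ for some $e \in (X \cup Y) \setminus (B \cup C)$. If $e \in X$, then (I1) implies $B + e \in \mathcal{I}|X$, contradicting that $B$ is a base of ${\bf M}|X$. If instead $e \in Y$, then Lemma~\ref{lemma:contraction} yields $C + e \in \mathcal{I}/X$ and therefore $C + e$ is independent in $({\bf M}/X)|Y$, contradicting that $C$ is a base of $({\bf M}/X)|Y$.

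For the first identity, both $({\bf M}/X)/Y$ and ${\bf M}/(X \cup Y)$ have ground set $U \setminus (X \cup Y)$, so it suffices to match independent sets. For any $J \subseteq U \setminus (X \cup Y)$, applying Lemma~\ref{lemma:contraction} first to ${\bf M}/X$ with the base $C$ of $({\bf M}/X)|Y$, and then to ${\bf M}$ with the base $B$ of ${\bf M}|X$, shows $J \in (\mathcal{I}/X)/Y$ iff $J \cup B \cup C \in \mathcal{I}$. Applying Lemma~\ref{lemma:contraction} directly to ${\bf M}$ with the base $B \cup C$ of ${\bf M}|(X \cup Y)$ furnished by the auxiliary claim shows $J \in \mathcal{I}/(X \cup Y)$ iff $J \cup B \cup C \in \mathcal{I}$, giving the equality. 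For the second identity, both sides have ground set $Y$: for $I \subseteq Y$, independence in $({\bf M}/X)|Y$ is equivalent via Lemma~\ref{lemma:contraction} to $I \cup B \in \mathcal{I}$, which (since $I \cup B \subseteq X \cup Y$) is the same as independence of $I \cup B$ in ${\bf M}|(X \cup Y)$; one more application of Lemma~\ref{lemma:contraction} inside ${\bf M}|(X \cup Y)$, using that $B$ remains a base of $({\bf M}|(X \cup Y))|X = {\bf M}|X$, translates this to independence of $I$ in $({\bf M}|(X \cup Y))/X$.

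The main obstacle is the auxiliary claim that $B \cup C$ is a base of ${\bf M}|(X \cup Y)$; once this is pinned down, each identity reduces to a short bookkeeping argument using Lemma~\ref{lemma:contraction}. The only subtlety worth flagging is recognizing that $({\bf M}|(X \cup Y))|X = {\bf M}|X$, which lets the same base $B$ play both roles in the second identity without any further work.
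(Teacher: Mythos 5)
The paper does not prove Lemma~\ref{lemma:minor}; it states it as a standard fact and cites Oxley (Proposition~3.1.25), so there is no in-paper argument to compare against. Your self-contained derivation from Lemma~\ref{lemma:contraction} is correct: the auxiliary claim that $B \cup C$ is a base of ${\bf M}|(X \cup Y)$ is established properly (independence of $B \cup C$ via Lemma~\ref{lemma:contraction}, maximality via the two-case analysis on whether $e \in X$ or $e \in Y$), and since $X$ and $Y$ are disjoint both $J \cup C$ and $I \cup B$ land in the ground sets needed for the subsequent applications of Lemma~\ref{lemma:contraction}. Your observation that $({\bf M}|(X \cup Y))|X = {\bf M}|X$, so that the same $B$ serves as the base required in the final contraction of the second identity, is the right way to close the loop; altogether this is a clean substitute for the omitted textbook proof.
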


The following lemma easily follows from 
Lemmas~\ref{lemma:contraction} and 
\ref{lemma:minor}. 
For completeness, we give its proof. 

\begin{lemma} \label{lemma:matroid_chain}
Assume that we are given a partition $\{U_1,U_2,\ldots,U_{\ell}\}$ of $U$.
Then there exists a base $B$ of ${\bf M}$ 
such that 
$B \cap U_{1,x}$ is 
a base of ${\bf M}| U_{1,x}$
for every integer $x \in \{1,2,\ldots,\ell\}$, 
where we define $U_{1,x} := \bigcup_{i=1}^x U_i$. 
\end{lemma}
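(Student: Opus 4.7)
The plan is to argue by induction on the number $\ell$ of parts, making essential use of Lemmas~\ref{lemma:contraction} and~\ref{lemma:minor} as the hint suggests. The base case $\ell=1$ is trivial: any base of ${\bf M} = {\bf M}|U_1$ works. For the inductive step, I would first pick an arbitrary base $B_1$ of ${\bf M}|U_1$ and then pass to the contracted matroid ${\bf M}/U_1$ on ground set $U_2 \cup \cdots \cup U_{\ell}$. Applying the inductive hypothesis to ${\bf M}/U_1$ equipped with the partition $\{U_2, \ldots, U_{\ell}\}$ yields a base $B'$ of ${\bf M}/U_1$ whose prefix $B' \cap (U_2 \cup \cdots \cup U_x)$ is a base of $({\bf M}/U_1)|(U_2 \cup \cdots \cup U_x)$ for every $x \in \{2, \ldots, \ell\}$. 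I would then set $B := B_1 \cup B'$ and claim it satisfies the requirement.

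The verification splits into two pieces. First, Lemma~\ref{lemma:contraction} applied to ${\bf M}$ with contracted set $U_1$ shows that $B = B_1 \cup B'$ is a base of ${\bf M}$, and already $B \cap U_1 = B_1$ is a base of ${\bf M}|U_1$. For $x \ge 2$, I would rewrite, via Lemma~\ref{lemma:minor} applied with $X = U_1$ and $Y = U_2 \cup \cdots \cup U_x$ (which are disjoint since $\{U_1, \ldots, U_{\ell}\}$ is a partition),
\[
({\bf M}/U_1)|(U_2 \cup \cdots \cup U_x) = ({\bf M}|U_{1,x})/U_1,
\]
so that $B' \cap (U_2 \cup \cdots \cup U_x)$ is a base of $({\bf M}|U_{1,x})/U_1$. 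Since $B_1$ is likewise a base of $({\bf M}|U_{1,x})|U_1 = {\bf M}|U_1$, a second invocation of Lemma~\ref{lemma:contraction}, now applied to ${\bf M}|U_{1,x}$ with contracted set $U_1$ and base $B_1$, gives that $B \cap U_{1,x} = B_1 \cup (B' \cap (U_2 \cup \cdots \cup U_x))$ is a base of ${\bf M}|U_{1,x}$, as required.

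No substantive obstacle arises. The only point requiring care is bookkeeping: tracking which matroid plays the role of ``${\bf M}$'' at each application of Lemma~\ref{lemma:contraction} (once the original ${\bf M}$, once the restriction ${\bf M}|U_{1,x}$) and using Lemma~\ref{lemma:minor} to interchange restriction and contraction so that the inductive hypothesis on ${\bf M}/U_1$ can be transported back to statements about ${\bf M}|U_{1,x}$.
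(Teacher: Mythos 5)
Your proof is correct and takes essentially the same route as the paper's: both constructions build the base greedily part by part by contracting each $U_i$ in turn, and both verify the prefix condition via Lemma~\ref{lemma:minor} (to commute restriction and contraction) together with Lemma~\ref{lemma:contraction}. The paper phrases the construction as an explicit iterative algorithm and checks the prefix property by a forward induction on the prefix index $x$, whereas you package the same iteration as a recursion on the number of parts $\ell$ and transport the prefix conditions back from ${\bf M}/U_1$; this is a presentational difference, not a different argument.
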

\begin{proof}
We consider the algorithm described in Algorithm~\ref{alg:chain}.  
\begin{algorithm}[h]
Define ${\bf M}^{\prime}_0 := {\bf M}$. \\ 
\For{$i = 1,2,\ldots,\ell$}
{
    Find a base $B_i$ of ${\bf M}^{\prime}_{i-1}|U_i$. \\  
    Define ${\bf M}^{\prime}_i := {\bf M}^{\prime}_{i-1} / U_i$. 
}
Output $B_1 \cup B_2 \cup \cdots \cup B_{\ell}$ and halt. 
\caption{Algorithm for the proof of Lemma~\ref{lemma:matroid_chain}}
\label{alg:chain} 
\end{algorithm}

Assume that Algorithm~\ref{alg:chain} outputs $B$. 
We prove that $B$ satisfies the conditions in this lemma
by induction on $x$.
Since ${\bf M}^{\prime}_0|U_1 = {\bf M}|U_{1,1}$,  
$B$ satisfies the condition when $x = 1$. 

Assume that $B$ satisfies the condition when 
$x = j$ for some integer $j \in \{1,2,\ldots,\ell-1\}$. 
Then we consider the case in which $x = j+1$. 
The induction hypothesis implies that 
$B \cap U_{1,j}$ is a base of 
${\bf M}|U_{1,j}$. 
Line~3 of Algorithm~\ref{alg:chain} implies that 
$B \cap U_{j+1}= B_{j+1}$ is 
a base of ${\bf M}^{\prime}_{j}|U_{j + 1}$. 
Thus, since Lemma~\ref{lemma:minor} implies that 
\begin{equation*}
{\bf M}^{\prime}_j|U_{j+1} 
= ({\bf M} / U_{1,j}) | U_{j+1} 
= ({\bf M} | U_{1,j+1}) / U_{1,j}, \ \ \ 
{\bf M}|U_{1,j}  = 
({\bf M}|U_{1,j+1})|U_{1,j}, 
\end{equation*}
Lemma~\ref{lemma:contraction} implies that 
$B \cap U_{1,j+1}$ is 
a base of ${\bf M}|U_{1,j+1}$.
This completes the proof. 
\end{proof}

In our computation model, it is not difficult to see that for 
every symbol ${\rm S} \in \{{\rm D}, {\rm H}\}$ and 
every subset $F \subseteq E$, 
we can find a base of ${\bf M}_{\rm S}|F$ in polynomial time. 
Furthermore, we assume that we are given a symbol ${\rm S} \in \{{\rm D}, {\rm H}\}$,  
an independent set $I$ of ${\bf M}_{\rm S}$, and 
an element $e \in {\sf span}_{{\bf M}_{\rm S}}(I)$. 
Then we can compute ${\sf C}_{{\bf M}_{\rm S}}(e,I)$ in polynomial time as follows. 
Recall that for every element $f \in I + e$, 
$f \in {\sf C}_{{\bf M}_{\rm S}}(e,I)$ if and only if 
$I + e - f \in \mathcal{I}_{\rm S}$. 
Thus, we can compute ${\sf C}_{{\bf M}_{\rm S}}(e,I)$
in polynomial time
by determining whether $I + e - f \in \mathcal{I}_{\rm S}$
for all the elements $f \in I + e$. 

\section{Algorithm}

For each symbol ${\rm S} \in \{{\rm D}, {\rm H}\}$, 
we define 
${\sf C}_{\rm S}(\cdot, \cdot) := {\sf C}_{{\bf M}_{\rm S}}(\cdot, \cdot)$
and 
${\sf span}_{\rm S}(\cdot) := {\sf span}_{{\bf M}_{\rm S}}(\cdot)$.  

For each symbol ${\rm S} \in \{{\rm D}, {\rm H}\}$ and each 
subset $F \subseteq E$, 
we define ${\sf head}_{\rm S}(F)$ by 
\begin{equation*}
{\sf head}_{\rm S}(F) := 
\{e \in F \mid \mbox{$e \succsim_{\rm S} f$ 
for every element $f \in F$}\}.
\end{equation*}
Furthermore, for each subset $F \subseteq E$, 
we define ${\sf tail}_{\rm H}(F)$ by 
\begin{equation*}
{\sf tail}_{\rm H}(F) := 
\{e \in F \mid \mbox{$f \succsim_{\rm H} e$ 
for every element $f \in F$}\}.
\end{equation*}

For each independent set $I$ of ${\bf M}_{\rm H}$, 
we define ${\sf block}_{\rm H}(I)$ by 
\begin{equation*}
{\sf block}_{\rm H}(I) := 
\{e \in {\sf span}_{\rm H}(I) \mid 
\mbox{there exists an element $f \in {\sf D}_{\rm H}(e,I)$
such that $e \succsim_{\rm H} f$}\}. 
\end{equation*}
For every independent set $I$ of ${\bf M}_{\rm H}$, 
since we can compute ${\sf span}_{\rm H}(I)$ in polynomial time
and we can compute ${\sf D}_{\rm H}(e,I)$ in polynomial time 
for every element $e \in {\sf span}_{\rm H}(I)$, 
we can compute ${\sf block}_{\rm H}(I)$ in polynomial time. 

\subsection{First subroutine} 

Throughout this subsection, we assume that we are given a subset $F \subseteq E$. 

Define ${\rm Ch}_{\rm D}(F)$ as the output of Algorithm~\ref{alg:choice_D}. 
Since $H_i \neq \emptyset$ in the course of 
Algorithm~\ref{alg:choice_D}, 
the number of iterations of Lines~3 to 8 of Algorithm~\ref{alg:choice_D} 
is $O(|F|)$. 
Furthermore, Lemmas~\ref{lemma:contraction} and 
\ref{lemma:minor} imply that 
in Line~6, we can compute $P_i$ in polynomial time by finding a base of 
${\bf M}_{\rm D}|(\bigcup_{j=1}^{i-1}H_j)$. 
Thus, we can compute ${\rm Ch}_{\rm D}(F)$ in polynomial time. 

\begin{algorithm}[h]
Define $N_0 := F$, $I_0 := \emptyset$, and ${\bf N}_0 := {\bf M}_{\rm D}$. \\
Set $i := 0$. \\ 
\While{$N_i \neq \emptyset$}
{
    Set $i := i+1$. \\
    Define $H_i := {\sf head}_{\rm D}(N_{i-1})$ and $N_i := N_{i-1} \setminus H_i$. \\ 
    Define $P_i := \{e \in H_i \mid \mbox{$\{e\}$ is an independent set of ${\bf N}_{i-1}$}\}$ 
    and $I_i := I_{i-1} \cup P_i$. \\ 
    Define ${\bf N}_i := {\bf N}_{i-1} / H_i$. \\ 
}
Define $i_{\rm D} := i$. \\
Output $I_{i_{\rm D}}$ and halt.
\caption{${\rm Ch}_{\rm D}(F)$}  
\label{alg:choice_D} 
\end{algorithm}

In what follows, we define $I^{\ast} := {\rm Ch}_{\rm D}(F)$. 
For each integer $i \in \{1,2,\ldots,i_{\rm D}\}$, we define
\begin{equation*}
H_{1,i} := \bigcup_{j=1}^i H_j.
\end{equation*} 
Throughout this subsection, 
let $B^{\ast}$ be a base of ${\bf M}_{\rm D}|F$ such that 
$B^{\ast} \cap H_{1,i}$ is a base of ${\bf M}|H_{1,i}$
for every integer $i \in \{1,2,\ldots,i_{\rm D}\}$. 
Lemma~\ref{lemma:matroid_chain} guarantees the existence of $B^{\ast}$.

\begin{lemma} \label{lemma:b_subset}
$B^{\ast} \subseteq I^{\ast}$. 
That is, ${\rm Ch}_{\rm D}(F)$ contains a base of ${\bf M}_{\rm D}|F$. 
\end{lemma}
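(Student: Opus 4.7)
The plan is to fix an arbitrary element $e \in B^{\ast}$ and show that the algorithm places $e$ into $I_i$ (and hence into $I^{\ast}$) at the iteration $i$ in which $e$ is absorbed by ${\sf head}_{\rm D}$. Since the sets $H_1, H_2, \ldots, H_{i_{\rm D}}$ partition $F$ and $B^{\ast} \subseteq F$, there is a unique index $i$ with $e \in H_i$, and it suffices to verify that $e$ satisfies the membership condition for $P_i$ in Line~6 of Algorithm~\ref{alg:choice_D}.

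First I would observe, by an easy induction using Lemma~\ref{lemma:minor}, that ${\bf N}_{i-1} = {\bf M}_{\rm D}/H_{1,i-1}$ (with the convention $H_{1,0} := \emptyset$). Therefore the condition defining $P_i$ is that $\{e\}$ be independent in ${\bf M}_{\rm D}/H_{1,i-1}$. Now I would apply Lemma~\ref{lemma:contraction} with the particular base $B^{\ast} \cap H_{1,i-1}$ of ${\bf M}_{\rm D}|H_{1,i-1}$, whose existence is precisely the property of $B^{\ast}$ guaranteed by Lemma~\ref{lemma:matroid_chain}. By that lemma, $\{e\}$ is independent in ${\bf M}_{\rm D}/H_{1,i-1}$ if and only if $(B^{\ast} \cap H_{1,i-1}) + e$ is independent in ${\bf M}_{\rm D}$.

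Since $e \in H_i \subseteq H_{1,i}$, we have $(B^{\ast} \cap H_{1,i-1}) + e \subseteq B^{\ast}$, and this set is independent in ${\bf M}_{\rm D}$ by (I1) applied to the independent set $B^{\ast}$. Thus $e \in P_i \subseteq I^{\ast}$, which completes the proof of $B^{\ast} \subseteq I^{\ast}$; the second assertion, that ${\rm Ch}_{\rm D}(F)$ contains a base of ${\bf M}_{\rm D}|F$, is then immediate since $B^{\ast}$ is such a base by construction. I do not anticipate a genuine obstacle here: the heart of the argument is a single invocation of Lemma~\ref{lemma:contraction} against the nested-base structure that Lemma~\ref{lemma:matroid_chain} provides, and the only mild bookkeeping is the identification ${\bf N}_{i-1} = {\bf M}_{\rm D}/H_{1,i-1}$.
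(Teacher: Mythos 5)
Your proof is correct and takes essentially the same route as the paper: identify the unique $i$ with $e \in H_i$, use Lemma~\ref{lemma:minor} to identify ${\bf N}_{i-1}$ with ${\bf M}_{\rm D}/H_{1,i-1}$, and apply Lemma~\ref{lemma:contraction} against the nested-base property of $B^{\ast}$ from Lemma~\ref{lemma:matroid_chain}. The only cosmetic difference is that you fold the $i=1$ case into the general argument via the convention $H_{1,0}=\emptyset$, whereas the paper treats $i=1$ separately.
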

\begin{proof}
Let $i$ be an integer in $\{1,2,\ldots,i_{\rm D}\}$. 
In order to prove this lemma, it is sufficient to prove that 
$B^{\ast} \cap H_i \subseteq P_i$. 
Let $e$ be an element in $B^{\ast} \cap H_i$. 
If $i = 1$, then since $\{e\} \in \mathcal{I}_{\rm D}$ and 
${\bf N}_0 = {\bf M}_{\rm D}$, 
we have $e \in P_1$.
Thus, we assume that $i \ge 2$. 
Since $B^{\ast} \cap H_{1,i-1}$ is a base of ${\bf M}_{\rm D}|H_{1,i-1}$ and 
(I1) implies that 
$(B^{\ast} \cap H_{1,i-1}) + e \in \mathcal{I}_{\rm D}$, 
Lemma~\ref{lemma:contraction} 
implies that 
$\{e\}$ is an independent set of ${\bf M}_{\rm D} / H_{1,i-1}$. 
Thus, since Lemma~\ref{lemma:minor} implies that 
${\bf N}_{i-1} = {\bf M}_{\rm D} /H_{1,i-1}$, 
$\{e\} \in P_i$
for every element $e \in B^{\ast} \cap H_i$. 
This completes the proof. 
\end{proof} 

\begin{lemma} \label{lemma:base_choice_D}
If $I^{\ast} \in \mathcal{I}_{\rm D}$, then $I_i$ is a base of ${\bf M}_{\rm D}| H_{1,i}$
for every integer $i \in \{1,2,\ldots,i_{\rm D}\}$. 
\end{lemma}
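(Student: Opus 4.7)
The plan is to proceed by induction on $i$, using the assumption $I^{\ast} \in \mathcal{I}_{\rm D}$ only to upgrade the inductive conclusion from ``$I_i$ is independent'' to ``$I_i$ is a base of ${\bf M}_{\rm D}| H_{1,i}$.''

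First I will handle the base case $i=1$ directly. Since $\{e\} \in \mathcal{I}_{\rm D}$ for every $e \in E$, every element of $H_1$ passes the test in Line~6 with respect to ${\bf N}_0 = {\bf M}_{\rm D}$, so $I_1 = P_1 = H_1$. Then $I_1 \subseteq I^{\ast} \in \mathcal{I}_{\rm D}$ together with (I1) implies $I_1 \in \mathcal{I}_{\rm D}$, and since $I_1 = H_1$ exhausts the ground set of ${\bf M}_{\rm D}|H_1$, it is automatically a base.

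For the inductive step, suppose $I_{i-1}$ is a base of ${\bf M}_{\rm D}| H_{1,i-1}$. Since $I_i = I_{i-1} \cup P_i \subseteq I^{\ast}$, property (I1) gives $I_i \in \mathcal{I}_{\rm D}$. Also, because $P_i \subseteq H_i$ is disjoint from $H_{1,i-1}$, we have $I_i \cap H_{1,i-1} = I_{i-1}$ and $I_i \cap H_i = P_i$. To prove maximality in $H_{1,i}$, I will assume for contradiction that there exists $e \in H_{1,i} \setminus I_i$ with $I_i + e \in \mathcal{I}_{\rm D}$, and split on the location of $e$. If $e \in H_{1,i-1}$, then $e \notin I_{i-1}$ and $I_{i-1} + e \subseteq I_i + e \in \mathcal{I}_{\rm D}$, contradicting that $I_{i-1}$ is a base of ${\bf M}_{\rm D}| H_{1,i-1}$. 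If instead $e \in H_i$, then $e \notin P_i$, which by definition of $P_i$ means $\{e\}$ is not independent in ${\bf N}_{i-1} = {\bf M}_{\rm D} / H_{1,i-1}$ (using Lemma~\ref{lemma:minor} as in the proof of Lemma~\ref{lemma:b_subset}); applying Lemma~\ref{lemma:contraction} with $X = H_{1,i-1}$ and the base $B = I_{i-1}$ given by the inductive hypothesis yields $I_{i-1} + e \notin \mathcal{I}_{\rm D}$, again contradicting $I_{i-1} + e \subseteq I_i + e \in \mathcal{I}_{\rm D}$ via (I1).

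I do not anticipate a serious obstacle here: the assumption $I^{\ast} \in \mathcal{I}_{\rm D}$ is exactly what lets every $I_i$ inherit independence through (I1), and the contraction identity already exploited in Lemma~\ref{lemma:b_subset} is precisely the tool needed to translate ``$e \notin P_i$'' into ``$I_{i-1} + e$ is dependent in ${\bf M}_{\rm D}$.'' The one bookkeeping point to be careful with is the disjointness $H_i \cap H_{1,i-1} = \emptyset$, which ensures the decomposition $I_i \cap H_i = P_i$ and hence that the case $e \in H_i$ really forces $e \notin P_i$.
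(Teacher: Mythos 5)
Your proof is correct, but it takes a genuinely different route from the paper. The paper gives a short, non-inductive ``sandwich'' argument: it invokes Lemma~\ref{lemma:matroid_chain} to obtain the chain-compatible base $B^{\ast}$ of ${\bf M}_{\rm D}|F$ and Lemma~\ref{lemma:b_subset} to conclude $B^{\ast}\subseteq I^{\ast}$, so that $B^{\ast}\cap H_{1,i}$ (a base of ${\bf M}_{\rm D}|H_{1,i}$ by construction) sits inside $I_i = I^{\ast}\cap H_{1,i}$, which is independent by (I1); a superset of a base that is itself independent is a base. You instead argue maximality of $I_i$ directly by induction, without ever using $B^{\ast}$ or Lemma~\ref{lemma:b_subset}: the base case exploits the standing assumption $\{e\}\in\mathcal{I}_{\rm D}$ to get $I_1=H_1$, and the inductive step translates ``$e\notin P_i$'' into ``$I_{i-1}+e\notin\mathcal{I}_{\rm D}$'' via Lemma~\ref{lemma:contraction}, with the induction hypothesis supplying exactly the base of ${\bf M}_{\rm D}|H_{1,i-1}$ that Lemma~\ref{lemma:contraction} needs. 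Both approaches rely on the same contraction identity ${\bf N}_{i-1}={\bf M}_{\rm D}/H_{1,i-1}$; the paper front-loads the work into $B^{\ast}$ so that this lemma becomes a two-line corollary, while your version is self-contained here (at the cost of essentially paralleling the reasoning already carried out in Lemma~\ref{lemma:b_subset}) and makes more explicit why $I^{\ast}\in\mathcal{I}_{\rm D}$ forces each prefix $I_i$ to saturate $H_{1,i}$. One small presentational note: since $B^{\ast}$ and Lemma~\ref{lemma:b_subset} are needed elsewhere (e.g.\ Lemma~\ref{lemma:choice_D:upper_circuit}), your route does not save the machinery globally, only locally.
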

\begin{proof}
Let $i$ be an integer in $\{1,2,\ldots,i_{\rm D}\}$. 
Then since $I^{\ast} \in \mathcal{I}_{\rm D}$, 
(I1) implies that $I_i = I^{\ast} \cap H_{1,i}$ is an 
independent set of ${\bf M}_{\rm D}|H_{1,i}$. 
Lemma~\ref{lemma:b_subset} implies that 
$B^{\ast} \cap H_{1,i} \subseteq I_i$. 
Thus, since $B^{\ast} \cap H_{1,i}$ is a base of ${\bf M}_{\rm D}|H_{1,i}$, 
$I_i$ is a base of ${\bf M}_{\rm D}|H_{1,i}$. 
This completes the proof. 
\end{proof}

Notice that for every pair of distinct integers $i,j \in \{1,2,\ldots,i_{\rm D}\}$
such that $i < j$ and every pair of elements $e \in H_i$ and 
$f \in H_j$, we have $e \succ_{\rm D} f$.  

\begin{lemma} \label{lemma:dominate}
Assume that $I^{\ast} \in \mathcal{I}_{\rm D}$ and 
we are given an element $e \in F \setminus I^{\ast}$.
Then $e \in {\sf span}_{\rm D}(I^{\ast})$ and 
$f \succ_{\rm D} e$ for every element $f \in {\sf D}_{\rm D}(e, I^{\ast})$. 
\end{lemma}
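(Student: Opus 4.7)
The plan is to locate the layer of Algorithm~\ref{alg:choice_D} at which $e$ appears, use that layer to produce a circuit of ${\bf M}_{\rm D}$ through $e$ contained in the earlier head-layers, and then transfer this circuit to the fundamental circuit with respect to $I^{\ast}$ via Lemma~\ref{lemma:circuit_subset}. Since earlier head-layers strictly dominate later ones under $\succ_{\rm D}$, the domination claim follows immediately.

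More concretely, since $\{H_1,H_2,\ldots,H_{i_{\rm D}}\}$ partitions $F$, there is a unique integer $i$ with $e \in H_i$. Because $\{e\} \in \mathcal{I}_{\rm D}$ and ${\bf N}_0 = {\bf M}_{\rm D}$, we have $P_1 = H_1$, so $e \in F \setminus I^{\ast}$ forces $i \ge 2$. From $e \in H_i \setminus I^{\ast}$ and the definition of $P_i$, the singleton $\{e\}$ is not an independent set of ${\bf N}_{i-1}$. Lemma~\ref{lemma:minor} (as used in the proof of Lemma~\ref{lemma:b_subset}) gives ${\bf N}_{i-1} = {\bf M}_{\rm D}/H_{1,i-1}$, and Lemma~\ref{lemma:base_choice_D} says $I_{i-1}$ is a base of ${\bf M}_{\rm D}|H_{1,i-1}$. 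Applying Lemma~\ref{lemma:contraction} to this base, $\{e\}$ failing to be independent in ${\bf M}_{\rm D}/H_{1,i-1}$ means $I_{i-1} + e \notin \mathcal{I}_{\rm D}$; equivalently, $e \in {\sf span}_{\rm D}(I_{i-1})$.

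Now I would invoke Lemma~\ref{lemma:circuit_subset} with $I := I_{i-1}$, $J := I^{\ast}$, and the element $e$: since $I_{i-1} \subseteq I^{\ast}$, both are independent, and $e \notin I^{\ast}$, we conclude that $e \in {\sf span}_{\rm D}(I^{\ast})$ and ${\sf C}_{\rm D}(e, I^{\ast}) = {\sf C}_{\rm D}(e, I_{i-1}) \subseteq I_{i-1} + e \subseteq H_{1,i-1} + e$. Consequently ${\sf D}_{\rm D}(e, I^{\ast}) \subseteq H_{1,i-1}$, so every $f \in {\sf D}_{\rm D}(e,I^{\ast})$ lies in some $H_j$ with $j < i$, and by the remark preceding the lemma $f \succ_{\rm D} e$.

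I do not expect a serious obstacle here; the main point is the careful chaining of Lemmas~\ref{lemma:contraction} and \ref{lemma:minor} to identify ${\bf N}_{i-1}$ with ${\bf M}_{\rm D}/H_{1,i-1}$ so that the non-independence of $\{e\}$ at stage $i$ of the greedy algorithm translates into a circuit inside $H_{1,i-1} + e$, after which Lemma~\ref{lemma:circuit_subset} does all the remaining work.
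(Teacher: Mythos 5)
Your proof is correct and follows essentially the same path as the paper's: locating the head-layer $H_i$ with $i\ge 2$ containing $e$, identifying ${\bf N}_{i-1}$ with ${\bf M}_{\rm D}/H_{1,i-1}$ via Lemma~\ref{lemma:minor}, converting dependence of $\{e\}$ in the contraction to $e\in{\sf span}_{\rm D}(I_{i-1})$ via Lemmas~\ref{lemma:base_choice_D} and \ref{lemma:contraction}, and transferring the fundamental circuit up to $I^{\ast}$ via Lemma~\ref{lemma:circuit_subset}. The only cosmetic difference is that the paper obtains $e\in{\sf span}_{\rm D}(I^{\ast})$ directly from $I^{\ast}$ being a base of ${\bf M}_{\rm D}|F$, whereas you extract it from the conclusion of Lemma~\ref{lemma:circuit_subset}; both are equally valid.
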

\begin{proof}
Since Lemma~\ref{lemma:base_choice_D}
implies that $I^{\ast}$ is a base of ${\bf M}_{\rm D}|F$ and $e \in F$, 
$e \in {\sf span}_{\rm D}(I^{\ast})$.
Assume that $e \in H_i$ for some integer $i \in \{1,2,\ldots,i_{\rm D}\}$.
Then since $e \notin I^{\ast}$, $e \in H_i \setminus P_i$.
This implies that $\{e\}$ is not an independent set of ${\bf N}_{i-1}$. 
Notice that since $\{e\} \in \mathcal{I}_{\rm D}$ and ${\bf N}_0 = {\bf M}_{\rm D}$,  
$i \ge 2$ holds. 
Lemma~\ref{lemma:minor} implies that 
${\bf N}_{i-1} = {\bf M}_{\rm D}/ H_{1,i-1}$. 
Thus, since 
Lemma~\ref{lemma:base_choice_D}
implies that 
$I_{i-1}$ is a base of ${\bf M}_{\rm D}|H_{1,i-1}$, 
Lemma~\ref{lemma:contraction} implies that 
$e \in {\sf span}_{\rm D}(I_{i-1})$. 
Furthermore, since $I_{i-1} \subseteq I^{\ast}$, 
Lemma~\ref{lemma:circuit_subset} implies that 
${\sf C}_{\rm D}(e,I^{\ast}) = {\sf C}_{\rm D}(e,I_{i-1})$. 
This implies that 
$f \succ_{\rm D} e$ for every element $f \in {\sf D}_{\rm D}(e, I^{\ast})$.
This completes the proof. 
\end{proof} 

\begin{lemma} \label{lemma:choice_D:upper_circuit} 
Assume that we are given an element $e \in F$, and 
there exists a circuit $C$ of ${\bf M}_{\rm D}$ such that 
$e \in C$, $C \subseteq F$, and 
$f \succ_{\rm D} e$ for every element $f \in C - e$. 
Then $e \notin {\rm Ch}_{\rm D}(F)$.
\end{lemma}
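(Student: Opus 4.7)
The plan is to identify the iteration index $i$ at which $e$ is processed (i.e., the unique $i$ with $e \in H_i$) and to show that $e$ is rejected at Line~6 because $\{e\}$ fails to be independent in the contracted matroid ${\bf N}_{i-1}$.

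First I would use the preference structure to locate the elements of $C - e$. Since $e \in H_i = {\sf head}_{\rm D}(N_{i-1})$, we have $e \succsim_{\rm D} g$ for every $g \in N_{i-1}$. Therefore any $f$ with $f \succ_{\rm D} e$ cannot lie in $N_{i-1}$ and so must belong to $H_{1,i-1}$. Applying this to the hypothesis $f \succ_{\rm D} e$ for all $f \in C - e$ (and using $C \subseteq F$) yields $C - e \subseteq H_{1,i-1}$. Since $\{e\} \in \mathcal{I}_{\rm D}$ by assumption on singletons, $C \neq \{e\}$, so $C - e$ is nonempty and in particular $i \ge 2$.

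Next, since $C$ is a circuit, $C - e$ is an independent set of ${\bf M}_{\rm D}$ contained in $H_{1,i-1}$. I would extend it to a base $B^{\dagger}$ of ${\bf M}_{\rm D}|H_{1,i-1}$. Then $B^{\dagger} + e \supseteq C$, so $B^{\dagger} + e$ is dependent in ${\bf M}_{\rm D}$. Invoking Lemma~\ref{lemma:minor} to rewrite ${\bf N}_{i-1} = {\bf M}_{\rm D}/H_{1,i-1}$ and then Lemma~\ref{lemma:contraction} with the base $B^{\dagger}$, this dependence is equivalent to $\{e\}$ not being an independent set of ${\bf N}_{i-1}$. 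Consequently $e \notin P_i$, so $e \notin I_i \subseteq I^{\ast} = {\rm Ch}_{\rm D}(F)$, which is the desired conclusion.

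I do not foresee a serious obstacle. The only subtle point is using the tie structure correctly: the fact that elements of the same head class $H_j$ are mutually $\succsim_{\rm D}$-related (so that $f \succ_{\rm D} e$ forces $f$ to appear in an \emph{earlier} class, not merely the same one) is exactly what lets the argument pin down $C - e \subseteq H_{1,i-1}$ rather than $C - e \subseteq H_{1,i}$, and after that the matroid step is a routine application of the contraction lemmas already established.
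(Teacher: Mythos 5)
Your proof is correct, and it is a genuine simplification of the paper's argument. Both proofs locate the iteration $i$ with $e \in H_i$, use the preference hypothesis to pin down $C - e \subseteq H_{1,i-1}$, and then reduce the goal to showing that $\{e\}$ is dependent in ${\bf N}_{i-1} = {\bf M}_{\rm D}/H_{1,i-1}$ via Lemmas~\ref{lemma:contraction} and \ref{lemma:minor}. The difference is in how the dependence is established. The paper commits to the particular base $B^{\ast}$ fixed for the whole subsection (chosen by Lemma~\ref{lemma:matroid_chain}) and sets $B' := B^{\ast} \cap H_{1,i-1}$; since $C - e$ need not be contained in $B'$, the paper must manufacture, for each $f \in (C-e)\setminus B'$, a fundamental circuit $C_f$ with $C_f - f \subseteq B'$ and then invoke the circuit-axiom consequence Lemma~\ref{lemma:circuit_union} to splice $C$ with the $C_f$'s into a circuit $C'$ lying in $B' + e$. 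You instead observe that $C - e$ is itself independent and contained in $H_{1,i-1}$, so you may extend it to a base $B^{\dagger}$ of ${\bf M}_{\rm D}|H_{1,i-1}$ that \emph{contains} $C - e$; then $C \subseteq B^{\dagger} + e$ gives dependence immediately, and Lemma~\ref{lemma:contraction} (which works for any base of the restriction) finishes. This bypasses Lemma~\ref{lemma:circuit_union} and the fundamental-circuit construction entirely. The only cost is that your $B^{\dagger}$ is local to this lemma rather than the globally fixed $B^{\ast}$, but that is harmless here since the conclusion concerns only ${\bf N}_{i-1}$, which is base-independent.

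One cosmetic remark: your final sentence ``$e \notin P_i$, so $e \notin I_i \subseteq I^{\ast}$'' reads as if $e \notin I_i$ together with $I_i \subseteq I^{\ast}$ yields $e \notin I^{\ast}$, which is not valid logic on its own. What you actually need (and clearly intend) is that the classes $H_1,\dots,H_{i_{\rm D}}$ are disjoint and elements of $H_i$ can enter $I^{\ast}$ only through $P_i$, so $e \notin P_i$ already implies $e \notin I^{\ast}$. It would be worth stating that directly.
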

\begin{proof}
Assume that $e \in H_i$ for some integer $i \in \{1,2,\ldots,i_{\rm D}\}$. 
Notice that if $i = 1$, then $e \succsim_{\rm D} f$ for every element $f \in F$. 
Thus, since $\{e\} \in \mathcal{I}_{\rm D}$, we have $i \ge 2$. 

We prove that there exists a circuit $C^{\prime}$ of ${\bf M}_{\rm D}$ 
such that $e \in C^{\prime}$ and 
$C^{\prime} - e \subseteq B^{\ast} \cap H_{1,i-1}$. 
The existence of $C^{\prime}$ implies that 
$(B^{\ast} \cap H_{1,i-1}) + e \notin \mathcal{I}$.
Thus, since $B^{\ast} \cap H_{1,i-1}$ is a base of 
${\bf M}_{\rm D}|H_{1,i-1}$, Lemmas~\ref{lemma:contraction} and 
\ref{lemma:minor} imply that  
$\{e\}$ is not an independent set of 
${\bf N}_{i-1} = {\bf M}_{\rm D}/H_{1,i-1}$. 
Thus, $e \notin P_i$. This completes the proof. 

Define $B^{\prime} := B^{\ast} \cap H_{1,i-1}$.
Since $f \succ_{\rm D} e$
for every element $f \in C - e$, 
we have $C - e \subseteq H_{1,i-1}$. 
This implies that if $C - e \subseteq B^{\prime}$, then 
the proof is done. 
Thus, we assume that $(C - e) \setminus B^{\prime} \neq \emptyset$. 

In what follows, we prove that for every element $f \in (C-e) \setminus B^{\prime}$, 
there exists a circuit $C_f$ of ${\bf M}_{\rm D}$ such that 
$f \in C_f$ and $C_f - f \subseteq B^{\prime}$.  
If we can prove this, since $e \notin B^{\prime}$ follows from 
$e \in H_i$, 
Lemma~\ref{lemma:circuit_union} implies that 
there exists a circuit $C^{\prime}$ of ${\bf M}_{\rm D}$ such that 
\begin{equation*}
e \in C^{\prime} \subseteq \Big(C \cup 
\Big(\bigcup_{f \in (C -e) \setminus B^{\prime}}C_f\Big)\Big) 
\setminus \big((C - e) \setminus B^{\prime}\big).
\end{equation*} 
Since $C_f - f \subseteq B^{\prime}$ for every 
element $f \in (C - e) \setminus B^{\prime}$, 
$C^{\prime} - e\subseteq B^{\prime}$. 
This completes the proof. 

Let $f$ be an element in $(C - e) \setminus B^{\prime}$. 
Assume that $f \in H_x$ for some integer $x \in \{1,2,\ldots,i-1\}$. 
Since $f \notin B^{\prime}$, $f \notin B^{\ast} \cap H_{1,x}$. 
Thus, since $B^{\ast} \cap H_{1,x}$ is a base of ${\bf M}_{\rm D}|H_{1,x}$, 
$(B^{\ast} \cap H_{1,x}) + f \notin \mathcal{I}_{\rm D}$. 
Define $C_f := {\sf C}_{\rm D}(f, B^{\ast} \cap H_{1,x})$. 
Since $B^{\ast} \cap H_{1,x} \subseteq B^{\prime}$, 
$C_f - f \subseteq B^{\prime}$. 
This completes the proof. 
\end{proof}

\subsection{Second subroutine}

Throughout this subsection, we assume that we are given a subset $F \subseteq E$. 

Define ${\rm Ch}_{\rm H}(F)$ as the output 
of Algorithm~\ref{alg:choice_H}. 
Notice that in the course of 
Algorithm~\ref{alg:choice_H}, 
$I_i \in \mathcal{I}_{\rm H}$.  
Thus, ${\rm Ch}_{\rm H}(F) \in \mathcal{I}_{\rm H}$. 
Furthermore, since we can compute ${\sf C}_{\rm H}(e_i,I_{i-1})$ 
in polynomial time in Line~8, 
we can compute ${\rm Ch}_{\rm H}(F)$ in polynomial time. 

\begin{algorithm}[h]
Define $N_{0} := F$ and $I_{0} := \emptyset$. \\
Set $i := 0$. \\
\While{$N_{i}\neq \emptyset$}
{
    Set $i := i+1$. \\
    Define $e_{i}$ as an element in $N_{i-1}$. \\  
    Define $N_{i} := N_{i-1} - e_{i}$. \\
       \uIf{$e_i \in {\sf span}_{\rm H}(I_{i-1})$}
        {
            Define $I_{i} := (I_{i-1} + e_i) \setminus {\sf tail}_{\rm H}({\sf C}_{\rm H}(e_{i}, I_{i-1}))$. 
        }
        \Else
        {
            Define $I_{i} := I_{i-1} + e_i$.
        }
}
Define $i_{\rm H} := i$. \\
Output $I_{i_{\rm H}}$ and halt. 
\caption{${\rm Ch}_{\rm H}(F)$}  
\label{alg:choice_H} 
\end{algorithm}

\begin{lemma} \label{lemma:choice_H}
For every element $e \in F \setminus {\rm Ch}_{\rm H}(F)$, 
there exists a circuit $C$ of ${\bf M}_{\rm H}$ such that 
$C \subseteq F$, $e \in C$, 
$f \succsim_{\rm H} e$ for every element $f \in C$. 
\end{lemma}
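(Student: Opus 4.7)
My plan is to exploit the fact that every element of $F$ is processed exactly once by Algorithm~\ref{alg:choice_H} and to point to the specific iteration at which $e$ is discarded; the fundamental circuit computed during that iteration will serve as the required $C$.

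First, since each iteration picks $e_i \in N_{i-1}$ and sets $N_i = N_{i-1} - e_i$, every element of $F$ appears exactly once in the sequence $e_1, e_2, \ldots, e_{i_{\rm H}}$. Given $e \in F \setminus I_{i_{\rm H}}$, let $i(e)$ be the unique index with $e = e_{i(e)}$, and let $j$ be the smallest index with $j \ge i(e)$ and $e \notin I_j$ (such a $j$ exists because $e \notin I_{i_{\rm H}}$). I would then case-split on whether $j = i(e)$ or $j > i(e)$ and show that in both cases the if-branch of the algorithm is executed at step $j$ and $e \in {\sf tail}_{\rm H}({\sf C}_{\rm H}(e_j, I_{j-1}))$. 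For $j = i(e)$ this is because the else-branch would retain $e = e_{i(e)}$, so the if-branch must have fired and $e$ must have been thrown out as part of the tail. For $j > i(e)$ this is because $e \in I_{j-1}$, whereas the else-branch only adds $e_j$ to $I_{j-1}$ and could not have removed $e$.

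Taking $C := {\sf C}_{\rm H}(e_j, I_{j-1})$, I would verify the three required properties. Since $e \in {\sf tail}_{\rm H}(C) \subseteq C$, we immediately get $e \in C$, and by the definition of ${\sf tail}_{\rm H}$ we get $f \succsim_{\rm H} e$ for every $f \in C$ (the case $f = e$ is trivial). Finally $C \subseteq I_{j-1} + e_j$, and a straightforward induction on $k$ shows that every element of $I_k$ is one of $e_1, \ldots, e_k$ (each update adds only $e_k$ and removes elements), so $C \subseteq \{e_1, \ldots, e_j\} \subseteq F$.

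I do not expect a substantive obstacle here. The only delicate point is choosing the right index $j$ and handling both subcases (removal at the same iteration as insertion versus later removal) uniformly; once $j$ is identified, the conclusion drops out immediately from the definition of ${\sf tail}_{\rm H}$ together with the fact that the if-branch of Line~8 discards exactly the tail of the fundamental circuit. No appeal to the circuit elimination lemmas is needed for this lemma.
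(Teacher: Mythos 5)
Your proposal is correct and follows essentially the same route as the paper: identify the iteration $j$ at which $e$ is discarded into ${\sf tail}_{\rm H}({\sf C}_{\rm H}(e_j,I_{j-1}))$ and take $C := {\sf C}_{\rm H}(e_j,I_{j-1})$. You are a bit more explicit than the paper about pinning down the index $j$ (the paper simply asserts such a $j$ exists), but the key observation and the verification of the three properties of $C$ are the same.
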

\begin{proof}
Let $e$ be an element in $F \setminus {\rm Ch}_{\rm H}(F)$. 
Assume that 
$e = e_i$ for some integer $i \in \{1,2,\ldots,i_{\rm H}\}$.
Since $e \notin {\rm Ch}_{\rm H}(F)$, 
there exists an integer $j \in \{i,i+1,\ldots,i_{\rm H}\}$ such that 
$e \in {\sf tail}_{\rm H}({\sf C}_{\rm H}(e_j,I_{j-1}))$.
Define $C := {\sf C}_{\rm H}(e_j,I_{j-1})$. 
Then since $I_{j-1} + e_j \subseteq F$, 
we have $C \subseteq F$. 
Cleary, $e \in C$. 
Lastly, the definition of ${\sf tail}_{\rm H}(\cdot)$
implies that 
$f \succsim_{\rm H} e$ for every element $f \in C$. 
This completes the proof. 
\end{proof} 

\subsection{Description of algorithm} 

Our algorithm for the super-stable common independent 
set problem is Algorithm~\ref{alg:main}. 

\begin{algorithm}[h]
Define $I_0 := \emptyset$ and ${\sf R}_0 := \emptyset$. \\
Set $t := 0$. \\ 
\While{$I_t \neq {\rm Ch}_{\rm D}(E \setminus {\sf R}_t)$} 
{
    Set $t := t +1$. \\
    Define $J_{t,0} := \emptyset$ and ${\sf Q}_{t,0} := {\sf R}_{t-1}$. \\
    Set $i := 0$. \\
    \While{$J_{t,i} \neq {\rm Ch}_{\rm D}(E \setminus {\sf Q}_{t,i})$} 
    {
        Set $i := i+1$. \\
        Define $J_{t,i} := {\rm Ch}_{\rm H}({\rm Ch}_{\rm D}(E \setminus {\sf Q}_{t,i-1}))$. \\
        Define ${\sf Q}_{t,i} := {\sf Q}_{t,i-1} \cup 
        ({\rm Ch}_{\rm D}(E \setminus {\sf Q}_{t,i-1}) \setminus J_{t,i})$. 
    }
    Define $i_t := i$. \\ 
    Define $T_{t,0} := J_{t,i_t}$ and ${\sf S}_{t,0} := {\sf Q}_{t,i_t}$. \\
    Set $j := 0$. \\
   \While{${\sf S}_{t,j} \cap {\sf block}_{\rm H}(T_{t,j}) \neq \emptyset$}
   {
        Set $j := j+1$. \\
        Define $b_{t,j}$ as an element in ${\sf S}_{t,j-1} \cap {\sf block}_{\rm H}(T_{t,j-1})$.\\ 
        Define $T_{t,j} := T_{t,j-1} \setminus {\sf tail}_{\rm H}({\sf C}_{\rm H}(b_{t,j},T_{t,j-1}))$. \\
        Define ${\sf S}_{t,j} := {\sf S}_{t,j-1} \cup {\sf tail}_{\rm H}({\sf C}_{\rm H}(b_{t,j},T_{t,j-1}))$.  
   }
   Define $j_t := j$. \\
   Define $I_{t} := T_{t,j_t}$ and ${\sf R}_{t} := {\sf S}_{t,j_t}$. 
}
Define $k := t$. \\
\uIf{$I_k \notin \mathcal{I}_{\rm D}$}
{
Output {\bf null} and halt. 
}
\uElseIf{\normalfont there exists an element $e_{\rm R} \in {\sf R}_k$ 
such that $I_k + e_{\rm R} \in \mathcal{I}_{\rm H}$}
{
Output {\bf null} and halt. 
}
\Else
{
Output $I_k$ and halt. 
}
\caption{Algorithm for the super-stable common independent set problem}  
\label{alg:main} 
\end{algorithm}

\begin{lemma} \label{lemma1:iteration}
In each iteration of Lines~3 to 23, 
the number of iteration of Lines~7 to 11
is $O(|E|)$.
\end{lemma}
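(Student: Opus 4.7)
The plan is a straightforward potential-function argument: the set ${\sf Q}_{t,i}$ is monotone non-decreasing in $i$, lives inside $E$, and in fact strictly increases in every iteration except possibly the last one that triggers the exit condition on Line~7.

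First I would record the observation that, for every subset $F \subseteq E$, the output ${\rm Ch}_{\rm H}(F)$ is a subset of $F$. This is immediate from Algorithm~\ref{alg:choice_H}: starting from $I_0 = \emptyset$, every element ever added to some $I_i$ is the element $e_i \in N_{i-1} \subseteq F$, and elements are only removed thereafter, so ${\rm Ch}_{\rm H}(F) = I_{i_{\rm H}} \subseteq F$.

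Applying this with $F := {\rm Ch}_{\rm D}(E \setminus {\sf Q}_{t,i-1})$ gives $J_{t,i} = {\rm Ch}_{\rm H}(F) \subseteq F$. Since also $F \subseteq E \setminus {\sf Q}_{t,i-1}$ by construction of ${\rm Ch}_{\rm D}$, the newly adjoined set $F \setminus J_{t,i}$ is disjoint from ${\sf Q}_{t,i-1}$, so Line~11 gives
\begin{equation*}
|{\sf Q}_{t,i}| = |{\sf Q}_{t,i-1}| + |F \setminus J_{t,i}|.
\end{equation*}
Now I would argue that if ${\sf Q}_{t,i} = {\sf Q}_{t,i-1}$ for some $i$, then this $i$ must equal $i_t$: equality forces $F \setminus J_{t,i} = \emptyset$, and combined with $J_{t,i} \subseteq F$ this gives $J_{t,i} = F = {\rm Ch}_{\rm D}(E \setminus {\sf Q}_{t,i-1}) = {\rm Ch}_{\rm D}(E \setminus {\sf Q}_{t,i})$, which is exactly the negation of the loop condition on Line~7. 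Contrapositively, ${\sf Q}_{t,i} \supsetneq {\sf Q}_{t,i-1}$ for every $i \in \{1, 2, \ldots, i_t - 1\}$.

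Since ${\sf Q}_{t,0} = {\sf R}_{t-1}$ and ${\sf Q}_{t,i} \subseteq E$ for all $i$, this strict monotonicity yields $i_t - 1 \leq |E \setminus {\sf R}_{t-1}| \leq |E|$, i.e., $i_t = O(|E|)$, as required. There is no real obstacle: the whole argument rests on the containment ${\rm Ch}_{\rm H}(F) \subseteq F$, which is visible from the pseudocode, together with the observation that equality of consecutive ${\sf Q}_{t,i}$'s is exactly what makes the inner loop terminate.
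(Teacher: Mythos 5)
Your proof is correct and follows essentially the same route as the paper's: both show that if ${\sf Q}_{t,\ell} = {\sf Q}_{t,\ell-1}$ then $J_{t,\ell} = {\rm Ch}_{\rm D}(E \setminus {\sf Q}_{t,\ell-1}) = {\rm Ch}_{\rm D}(E \setminus {\sf Q}_{t,\ell})$, which terminates the inner loop, so ${\sf Q}_{t,i}$ strictly increases on all but the final iteration. You spell out the intermediate disjointness step ($F \subseteq E \setminus {\sf Q}_{t,i-1}$, hence $F \setminus J_{t,i}$ is new) a little more explicitly than the paper, but the underlying idea is identical.
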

\begin{proof}
In order to prove this lemma, 
it is sufficient to prove that 
when $i = \ell$, if Algorithm~\ref{alg:main} does not go to Line~12, then 
${\sf Q}_{t,\ell-1} \subsetneq {\sf Q}_{t,\ell}$. 
Since ${\sf Q}_{t,\ell-1} \subseteq {\sf Q}_{t,\ell}$, 
it is sufficient to prove that 
${\sf Q}_{t,\ell-1} \neq {\sf Q}_{t,\ell}$.
In order to prove this by contradiction, we assume that 
${\sf Q}_{t,\ell-1} = {\sf Q}_{t,\ell}$. 
Since 
the definition of ${\rm Ch}_{\rm H}(\cdot)$ implies that 
$J_{t,\ell} \subseteq {\rm Ch}_{\rm D}(E \setminus {\sf Q}_{t,\ell-1})$, 
$J_{t,\ell} = {\rm Ch}_{\rm D}(E \setminus {\sf Q}_{t,\ell-1})$.
Recall that ${\sf Q}_{t,\ell-1} = {\sf Q}_{t,\ell}$.
This implies that 
$J_{t,\ell} = {\rm Ch}_{\rm D}(E \setminus {\sf Q}_{t,\ell})$.
This contradicts the fact that 
Algorithm~\ref{alg:main} does not go to Line~12
when $i = \ell$.
This completes the proof.  
\end{proof} 

\begin{lemma} \label{lemma2:iteration}
In each iteration of Lines~3 to 23, 
the number of iteration of Lines~15 to 20
is $O(|E|)$.
\end{lemma}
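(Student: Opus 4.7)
The plan is to show that ${\sf S}_{t,j-1} \subsetneq {\sf S}_{t,j}$ in every iteration of Lines~15 to 20. Since ${\sf S}_{t,j} \subseteq E$, this immediately gives $j_t \le |E|$. The task therefore reduces to finding, at each step, an element of ${\sf tail}_{\rm H}({\sf C}_{\rm H}(b_{t,j}, T_{t,j-1}))$ that is not already in ${\sf S}_{t,j-1}$.

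First I would establish by induction on $j$ the disjointness invariant $T_{t,j} \cap {\sf S}_{t,j} = \emptyset$. The base case $j = 0$ follows from the construction in Lines~7 to 11: $J_{t,i_t} \subseteq {\rm Ch}_{\rm D}(E \setminus {\sf Q}_{t,i_t - 1}) \subseteq E \setminus {\sf Q}_{t,i_t - 1}$, while the set added to ${\sf Q}_{t,i_t - 1}$ to form ${\sf Q}_{t,i_t}$ is by construction disjoint from $J_{t,i_t}$. The inductive step is immediate, since ${\sf tail}_{\rm H}({\sf C}_{\rm H}(b_{t,j}, T_{t,j-1}))$ is contained in $T_{t,j-1} + b_{t,j}$ and $b_{t,j} \in {\sf S}_{t,j-1}$, so subtracting the tail from $T_{t,j-1}$ and adding it to ${\sf S}_{t,j-1}$ preserves disjointness.

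The main step is then to produce an element $e \in {\sf tail}_{\rm H}({\sf C}_{\rm H}(b_{t,j}, T_{t,j-1}))$ with $e \neq b_{t,j}$: such an $e$ lies in $T_{t,j-1}$ (because the tail lies in $T_{t,j-1} + b_{t,j}$), and by the disjointness invariant $e \notin {\sf S}_{t,j-1}$, while $e \in {\sf S}_{t,j}$, giving the required strict growth. From $b_{t,j} \in {\sf block}_{\rm H}(T_{t,j-1})$ we have some $f \in {\sf D}_{\rm H}(b_{t,j}, T_{t,j-1})$ with $b_{t,j} \succsim_{\rm H} f$. Pick any $g^* \in {\sf tail}_{\rm H}({\sf C}_{\rm H}(b_{t,j}, T_{t,j-1}))$, which exists since minimal elements exist in any non-empty finite set under a complete transitive relation. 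If $g^* \neq b_{t,j}$, take $e := g^*$. Otherwise $b_{t,j}$ itself is minimal in the circuit, so every $x$ in the circuit satisfies $x \succsim_{\rm H} b_{t,j} \succsim_{\rm H} f$ by transitivity, showing $f$ is also minimal; then take $e := f$, which is distinct from $b_{t,j}$ since $f \in {\sf D}_{\rm H}(b_{t,j}, T_{t,j-1})$. I expect the only real obstacle to be this last case, where the block condition must be converted into the existence of a second minimal element via transitivity; the rest is bookkeeping with the definitions.
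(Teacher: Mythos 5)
Your proof is correct and takes essentially the same route as the paper's: establish the disjointness invariant $T_{t,j} \cap {\sf S}_{t,j} = \emptyset$ and then show that ${\sf tail}_{\rm H}({\sf C}_{\rm H}(b_{t,j},T_{t,j-1}))$ contains an element different from $b_{t,j}$, which therefore lies in $T_{t,j-1}$ and so is newly added to ${\sf S}_{t,j}$. The paper phrases the key step as showing ${\sf tail}_{\rm H}({\sf C}_{\rm H}(b_{t,j},T_{t,j-1})) \neq \{b_{t,j}\}$ and asserts this directly from the block condition; you fill in the same transitivity argument explicitly and also spell out the disjointness invariant that the paper merely "notices", so your version is a more detailed rendering of the same idea.
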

\begin{proof}
Notice that in the course of Algorithm~\ref{alg:main}, 
$T_{t,j} \cap {\sf S}_{t,j} = \emptyset$. 
In order to prove this lemma, 
since ${\sf S}_{t,j} \subseteq E$, 
it is sufficient to prove that 
\begin{equation*}
{\sf tail}_{\rm H}({\sf C}_{\rm H}(b_{t,j},T_{t,j-1})) \neq \{b_{t,j}\}.
\end{equation*} 
This follows from the fact that 
since $b_{t,j} \in {\sf block}_{\rm H}(T_{t,j-1})$, 
there exists 
an element $e \in {\sf D}_{\rm H}(b_{t,j},T_{t,j-1})$ 
such that 
$b_{t,j} \succsim_{\rm H} e$. 
This completes the proof. 
\end{proof} 

\begin{lemma} \label{lemma3:iteration} 
The number of iteration of Lines 3 to 23
is $O(|E|)$.
\end{lemma}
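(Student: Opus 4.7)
The plan is to prove that the chain ${\sf R}_0 \subseteq {\sf R}_1 \subseteq {\sf R}_2 \subseteq \cdots$ is strictly increasing at every step except possibly the last. Since each ${\sf R}_t$ is a subset of $E$, such a chain has at most $|E|+1$ terms, which gives the bound $O(|E|)$.

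First I would record the monotonicity. A direct inspection of Lines~5, 11, 13, 19, and 22 of Algorithm~\ref{alg:main} gives
\begin{equation*}
{\sf R}_{t-1} = {\sf Q}_{t,0} \subseteq \cdots \subseteq {\sf Q}_{t,i_t} = {\sf S}_{t,0} \subseteq \cdots \subseteq {\sf S}_{t,j_t} = {\sf R}_t,
\end{equation*}
so ${\sf R}_{t-1} \subseteq {\sf R}_t$ automatically. The real content is the following claim: if ${\sf R}_t = {\sf R}_{t-1}$, then $I_t = {\rm Ch}_{\rm D}(E \setminus {\sf R}_t)$, so that the guard of the while loop on Line~3 forces termination at iteration $t$. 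Granting the claim, whenever iteration $t+1$ of the outer loop executes we must have ${\sf R}_t \supsetneq {\sf R}_{t-1}$, which immediately yields a strict chain of subsets of $E$ and hence at most $|E|+1$ outer iterations.

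To prove the claim, the equality ${\sf R}_t = {\sf R}_{t-1}$ collapses every inclusion in the chain above to an equality; in particular ${\sf Q}_{t,0} = {\sf Q}_{t,i_t}$ and ${\sf S}_{t,0} = {\sf S}_{t,j_t}$. The arguments used inside the proofs of Lemmas~\ref{lemma1:iteration} and \ref{lemma2:iteration} show that every step of the first (respectively second) inner loop strictly enlarges ${\sf Q}_{t,\cdot}$ (respectively ${\sf S}_{t,\cdot}$), hence these equalities force $i_t = 0$ and $j_t = 0$. But $i_t = 0$ means that the Line~7 guard already failed at $i = 0$, so ${\rm Ch}_{\rm D}(E \setminus {\sf R}_{t-1}) = J_{t,0} = \emptyset$, and together with ${\sf R}_{t-1} = {\sf R}_t$ we obtain ${\rm Ch}_{\rm D}(E \setminus {\sf R}_t) = \emptyset$. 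On the other hand, $j_t = 0$ combined with $T_{t,0} = J_{t,i_t} = J_{t,0} = \emptyset$ gives $I_t = T_{t,j_t} = \emptyset$, so indeed $I_t = {\rm Ch}_{\rm D}(E \setminus {\sf R}_t)$.

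The only potentially delicate point is the joint use of Lemmas~\ref{lemma1:iteration} and \ref{lemma2:iteration} to simultaneously force $i_t = 0$ and $j_t = 0$ from the single hypothesis ${\sf R}_t = {\sf R}_{t-1}$; everything else is mechanical bookkeeping that can be read straight off the algorithm.
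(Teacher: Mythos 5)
Your overall plan is the contrapositive of the paper's: the paper shows that if the outer loop does \emph{not} terminate after iteration $t$ then ${\sf R}_{t-1} \subsetneq {\sf R}_t$, and you show that if ${\sf R}_t = {\sf R}_{t-1}$ then the outer loop \emph{does} terminate. That restructuring is perfectly sound, and your deduction of $j_t = 0$ from ${\sf S}_{t,0} = {\sf S}_{t,j_t}$ is correct: Lemma~\ref{lemma2:iteration}'s argument applies unconditionally to every executed iteration of the second inner loop.

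However, the analogous deduction of $i_t = 0$ is a genuine error. Lemma~\ref{lemma1:iteration}'s argument shows that ${\sf Q}_{t,\ell-1} \subsetneq {\sf Q}_{t,\ell}$ only for iterations $\ell$ of the first inner loop that do \emph{not} trigger termination; the final iteration, the one that sets $i_t$, is allowed to leave ${\sf Q}$ unchanged. Concretely, the inner loop can be entered because ${\rm Ch}_{\rm D}(E \setminus {\sf R}_{t-1}) \neq \emptyset$, and iteration $1$ can then produce $J_{t,1} = {\rm Ch}_{\rm D}(E \setminus {\sf Q}_{t,0})$, so that ${\sf Q}_{t,1} = {\sf Q}_{t,0}$ and the loop exits with $i_t = 1 \neq 0$. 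Thus from ${\sf Q}_{t,0} = {\sf Q}_{t,i_t}$ you can only conclude $i_t \in \{0,1\}$, and the intermediate claims $T_{t,0} = J_{t,0} = \emptyset$ and ${\rm Ch}_{\rm D}(E \setminus {\sf R}_t) = \emptyset$ may fail when $i_t = 1$.

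Fortunately the gap is easy to close and you do not need $i_t = 0$ at all. Once $j_t = 0$ is established, the exit condition of the first inner loop gives $J_{t,i_t} = {\rm Ch}_{\rm D}(E \setminus {\sf Q}_{t,i_t})$, and then
\begin{equation*}
I_t = T_{t,j_t} = T_{t,0} = J_{t,i_t} = {\rm Ch}_{\rm D}(E \setminus {\sf Q}_{t,i_t}) = {\rm Ch}_{\rm D}(E \setminus {\sf S}_{t,0}) = {\rm Ch}_{\rm D}(E \setminus {\sf S}_{t,j_t}) = {\rm Ch}_{\rm D}(E \setminus {\sf R}_t),
\end{equation*}
which is exactly the identity the paper invokes in its case $j_\ell = 0$. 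With that substitution your proof is correct, and is in fact a clean contrapositive rewriting of the paper's argument; the chain of inclusions ${\sf R}_{t-1} \subseteq {\sf Q}_{t,\cdot} \subseteq {\sf S}_{t,\cdot} \subseteq {\sf R}_t$ you record at the outset neatly replaces the paper's two-case split on whether $j_\ell = 0$ or $j_\ell \ge 1$.
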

\begin{proof}
In order to prove this lemma, 
it is sufficient to prove that 
when $t = \ell$, 
if Algorithm~\ref{alg:main} does not go to Line~24, then 
${\sf R}_{\ell-1} \subsetneq {\sf R}_{\ell}$. 

First, we consider the case in which 
${\sf S}_{\ell,0} \cap {\sf block}_{\rm H}(T_{\ell,0}) = \emptyset$.
In this case, since 
$j_{\ell} = 0$,  
$I_{\ell} = J_{\ell,i_{\ell}}$ and ${\sf R}_{\ell} = {\sf Q}_{\ell,i_{\ell}}$.
Since $J_{t,\ell} = {\rm Ch}_{\rm D}(E \setminus {\sf Q}_{t,\ell})$, 
this contradicts the fact that 
Algorithm~\ref{alg:main} does not go to Line~24 when $t = \ell$. 

Next, we assume that 
${\sf S}_{\ell,0} \cap {\sf block}_{\rm H}(T_{\ell,0}) \neq \emptyset$.
Then 
$j_{\ell} \ge 1$ and 
\begin{equation*}
{\sf tail}_{\rm H}({\sf C}_{\rm H}(b_{\ell,1},T_{\ell,0})) \neq \{b_{\ell,1}\}, \ \ 
{\sf tail}_{\rm H}({\sf C}_{\rm H}(b_{\ell,1},T_{\ell,0})) 
\subseteq {\sf S}_{\ell,1} \setminus {\sf S}_{\ell,0}.
\end{equation*} 
Since $T_{\ell,0} \cap {\sf S}_{\ell,0} = \emptyset$, this implies that 
${\sf R}_{\ell-1} \subsetneq {\sf R}_{\ell}$. 
This completes the proof. 
\end{proof} 

Lemmas~\ref{lemma1:iteration}, \ref{lemma2:iteration}, and 
\ref{lemma3:iteration} imply that 
Algorithm~\ref{alg:main} is a polynomial-time algorithm. 

\section{Correctness} 

In this section, we prove the correctness of Algorithm~\ref{alg:main}. 

\begin{lemma} \label{lemma:stable}
If Algorithm~\ref{alg:main} outputs $I_k$, then 
$I_k$ is a super-stable common independent set of 
${\bf M}_{\rm D}$ and ${\bf M}_{\rm H}$.
\end{lemma}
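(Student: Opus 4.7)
The plan is to verify, one at a time, the three conditions in the definition of a super-stable common independent set of ${\bf M}_{\rm D}$ and ${\bf M}_{\rm H}$: that $I_k\in\mathcal{I}_{\rm D}$, that $I_k\in\mathcal{I}_{\rm H}$, and that $E\setminus I_k\subseteq{\bf dom}_{\rm D}(I_k)\cup{\bf dom}_{\rm H}(I_k)$.

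The first two are quick. Membership in $\mathcal{I}_{\rm D}$ is forced by the test at Line~25: otherwise the algorithm would have halted with {\bf null}. For membership in $\mathcal{I}_{\rm H}$, I would trace $I_k=T_{k,j_k}$ back through the second inner loop. Its initial value $T_{k,0}=J_{k,i_k}$ equals ${\rm Ch}_{\rm H}(\cdot)$ and hence lies in $\mathcal{I}_{\rm H}$ by the construction of ${\rm Ch}_{\rm H}$; each update at Line~17 only deletes elements from $T_{k,j-1}$, so (I1) propagates independence forward to $I_k$.

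The heart of the argument is the third condition. The outer loop exited, so $I_k={\rm Ch}_{\rm D}(E\setminus{\sf R}_k)$; and since $J_{t,i}\cap{\sf Q}_{t,i}=\emptyset$ (immediate from Line~10) and $T_{t,j}\cap{\sf S}_{t,j}=\emptyset$ throughout the second inner loop (as noted in the proof of Lemma~\ref{lemma2:iteration}), we obtain $I_k\cap{\sf R}_k=\emptyset$. I then split $e\in E\setminus I_k$ into two cases. If $e\notin{\sf R}_k$, then $e\in(E\setminus{\sf R}_k)\setminus I_k$, and Lemma~\ref{lemma:dominate} applied with $F:=E\setminus{\sf R}_k$ (legal since $I_k\in\mathcal{I}_{\rm D}$) directly gives $e\in{\bf dom}_{\rm D}(I_k)$. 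If $e\in{\sf R}_k$, then the failure of the condition in Line~27 for this particular $e$ yields $I_k+e\notin\mathcal{I}_{\rm H}$, so $e\in{\sf span}_{\rm H}(I_k)$; the exit condition of the second inner loop then gives $e\notin{\sf block}_{\rm H}(I_k)$, and unfolding the definitions together with completeness of $\succsim_{\rm H}$ (which makes $\neg(e\succsim_{\rm H}f)$ equivalent to $f\succ_{\rm H}e$) converts this into $f\succ_{\rm H}e$ for every $f\in{\sf D}_{\rm H}(e,I_k)$, i.e.\ $e\in{\bf dom}_{\rm H}(I_k)$.

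The main obstacle is really bookkeeping rather than mathematics. One must track the disjointness $I_k\cap{\sf R}_k=\emptyset$ across the two nested inner loops, and one must recognize that the definition of ${\sf block}_{\rm H}$ is engineered precisely so that ${\sf span}_{\rm H}(I_k)\setminus{\sf block}_{\rm H}(I_k)={\bf dom}_{\rm H}(I_k)$. Once these two observations are in place, each case of the super-stability check reduces to a one-line application of either Lemma~\ref{lemma:dominate} or the loop-exit condition of Algorithm~\ref{alg:main}.
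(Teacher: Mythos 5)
Your proof is correct and uses the same ingredients as the paper's: the Line~26/28 null-checks to get $I_k\in\mathcal{I}_{\rm D}$ and $e\in{\sf span}_{\rm H}(I_k)$ for $e\in{\sf R}_k$, Lemma~\ref{lemma:dominate} applied to $I_k={\rm Ch}_{\rm D}(E\setminus{\sf R}_k)$ for elements outside ${\sf R}_k$, and the exit condition of the second inner loop to rule out ${\sf block}_{\rm H}$ elements in ${\sf R}_k$. The paper phrases this as a proof by contradiction (assume $I_k$ is not super-stable, locate a blocking $e$, show it must lie in ${\sf R}_k$, then derive a contradiction from the loop-exit condition), whereas you verify the inclusion $E\setminus I_k\subseteq{\bf dom}_{\rm D}(I_k)\cup{\bf dom}_{\rm H}(I_k)$ directly by the case split $e\notin{\sf R}_k$ versus $e\in{\sf R}_k$; the two are logically equivalent and equally short. (A minor simplification to your disjointness bookkeeping: $I_k\cap{\sf R}_k=\emptyset$ follows immediately from $I_k={\rm Ch}_{\rm D}(E\setminus{\sf R}_k)\subseteq E\setminus{\sf R}_k$, without tracing the loop invariants.)
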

\begin{proof}
Since Algorithm~\ref{alg:main} does not output 
{\bf null} in Line~26, 
$I_k \in \mathcal{I}_{\rm D}$. 
Furthermore, 
since ${\rm Ch}_{\rm H}(F) \in \mathcal{I}_{\rm H}$ for every subset $F \subseteq E$,
(I1) implies that $I_{k} \in \mathcal{I}_{\rm H}$.
These imply that $I_k$ is a common independent set of 
${\bf M}_{\rm D}$ and ${\bf M}_{\rm H}$. 
Thus, what remains is to prove that 
$I_k$ is super-stable. 

Assume that $I_k$ is not super-stable. 
Then in this case, there exists an element $e \in E \setminus I_k$ such that 
for each symbol ${\rm S} \in \{{\rm D}, {\rm H}\}$, one of the 
following conditions is satisfied.
\begin{description}
\setlength{\parskip}{0mm} 
\setlength{\itemsep}{1mm} 
\item[(B1)]
$e \notin {\sf span}_{\rm S}(I_k)$. 
\item[(B2)] 
$e \in {\sf span}_{\rm S}(I_k)$, and  
there exists an element $f \in {\sf D}_{\rm S}(e,I_k)$ 
such that $e \succsim_{\rm S} f$. 
\end{description}

First, we prove that $e \in {\sf R}_k$. 
Assume that 
$e \in E \setminus {\sf R}_k$. 
Since $e \notin I_k = {\rm Ch}_{\rm D}(E \setminus {\sf R}_k)$,
Lemma~\ref{lemma:dominate} implies that  
$e \in {\sf span}_{\rm D}(I_k)$ and 
$f \succ_{\rm D} e$ for every element $f \in {\sf D}_{\rm D}(e, I_k)$. 
However, this contradicts the fact that 
when ${\rm S} = {\rm D}$, 
one of (B1) and (B2) is satisfied.

Since Algorithm~\ref{alg:main} does not output {\bf null} in Line~28, 
when ${\rm S} = {\rm H}$, (B1) is not satisfied. 
Thus, (B2) is satisfied. 
That is, $e \in {\sf block}_{\rm H}(I_k)$. 
However, since $I_k = T_{k,j_k}$ and 
${\sf R}_k = {\sf S}_{k,j_k}$, 
${\sf R}_k \cap {\sf block}_{\rm H}(I_k) = \emptyset$. 
Since $e \in {\sf R}_k$, this is a contradiction. 
This completes the proof. 
\end{proof}

\begin{lemma} \label{lemma:R}
For every integer $t \in \{1,2,\ldots,k\}$ and 
every element $e \in J_{t,i_t} \cup {\sf Q}_{t,i_t}$, 
there exist integers $z \in \{1,2,\ldots,t\}$ 
and $i \in \{0,1,\ldots,i_z-1\}$ such that 
$e \in {\rm Ch}_{\rm D}(E \setminus {\sf Q}_{z,i})$.   
\end{lemma}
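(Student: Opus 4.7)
The plan is to prove this lemma by induction on $t$, with an auxiliary induction on $i \in \{0, 1, \ldots, i_t\}$ nested inside each inductive step. Two simple observations will carry most of the argument. First, by the construction of Algorithm~\ref{alg:choice_H}, the output ${\rm Ch}_{\rm H}(F)$ is always a subset of $F$, hence for every $i \ge 1$,
\[
J_{t,i} \;\subseteq\; {\rm Ch}_{\rm D}(E \setminus {\sf Q}_{t, i-1}).
\]
Second, Line~11 of Algorithm~\ref{alg:main} gives directly
\[
{\sf Q}_{t, i} \setminus {\sf Q}_{t, i-1} \;\subseteq\; {\rm Ch}_{\rm D}(E \setminus {\sf Q}_{t, i-1}).
\]
Each observation produces the desired certificate, with $z = t$ and inner-loop index $i - 1$, for the corresponding element.

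For $J_{t, i_t}$, the first observation applied with $i = i_t$ immediately finishes the job (or $J_{t, 0} = \emptyset$ is vacuous when $i_t = 0$). For ${\sf Q}_{t, i_t}$, I proceed by the sub-induction on $i$: at $i = 0$, ${\sf Q}_{t, 0} = {\sf R}_{t-1}$, which is empty in the base case $t = 1$ and in general requires the separate analysis described below; at each step $i \ge 1$, an element of ${\sf Q}_{t, i}$ either lies in ${\sf Q}_{t, i-1}$ (handled by the sub-inductive hypothesis) or in the increment (handled by the second observation).

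The main obstacle is the analysis of ${\sf R}_{t-1} = {\sf S}_{t-1, j_{t-1}}$ when $t \ge 2$, because this set is generally strictly larger than ${\sf Q}_{t-1, i_{t-1}} = {\sf S}_{t-1, 0}$ and the outer inductive hypothesis only covers the latter. For $e \in {\sf S}_{t-1, j_{t-1}} \setminus {\sf S}_{t-1, 0}$, pick the smallest $j \ge 1$ with $e \in {\sf S}_{t-1, j}$; then $e \in {\sf tail}_{\rm H}({\sf C}_{\rm H}(b_{t-1, j}, T_{t-1, j-1}))$. Since the proof of Lemma~\ref{lemma2:iteration} records that $T_{t-1, j-1} \cap {\sf S}_{t-1, j-1} = \emptyset$ and $b_{t-1, j} \in {\sf S}_{t-1, j-1}$, such a newly added $e$ must lie in $T_{t-1, j-1}$; because Line~17 of Algorithm~\ref{alg:main} only removes elements, $T_{t-1, j-1} \subseteq T_{t-1, 0} = J_{t-1, i_{t-1}}$. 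The first observation then supplies the certificate $z = t - 1$ with inner-loop index $i_{t-1} - 1$. To justify $i_{t-1} \ge 1$, note that if $i_{t-1} = 0$ then $T_{t-1, 0} = \emptyset$; since every singleton is independent in ${\bf M}_{\rm H}$, we get ${\sf span}_{\rm H}(\emptyset) = \emptyset$ and hence ${\sf block}_{\rm H}(\emptyset) = \emptyset$, so $j_{t-1} = 0$ and $I_{t-1} = \emptyset = {\rm Ch}_{\rm D}(E \setminus {\sf R}_{t-1})$, contradicting the fact that the outer while-loop continues into iteration $t$.
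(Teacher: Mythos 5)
Your proof is correct and follows essentially the same route as the paper's: induction on $t$, splitting $e$ according to whether it lies in $J_{t,i_t}$, in some increment ${\sf Q}_{t,i}\setminus{\sf Q}_{t,i-1}$, or in the inherited set ${\sf R}_{t-1}$, and then recursing on the last case. The only real difference is a matter of detail: the paper simply asserts ${\sf R}_{x-1}={\sf S}_{x-1,j_{x-1}}\subseteq J_{x-1,i_{x-1}}\cup{\sf Q}_{x-1,i_{x-1}}$ and hands everything to the inductive hypothesis, while you prove this containment explicitly (via the invariant $T_{t,j}\cap{\sf S}_{t,j}=\emptyset$ and the fact that the ${\sf tail}_{\rm H}$ sets come from $T_{t,j-1}+b_{t,j}$) and handle the elements that came from $J_{t-1,i_{t-1}}$ directly rather than through the hypothesis. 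Your justification of $i_{t-1}\ge 1$ works but is a bit roundabout; it suffices to note that any element of ${\sf S}_{t-1,j_{t-1}}\setminus{\sf S}_{t-1,0}$ lies in $T_{t-1,0}=J_{t-1,i_{t-1}}$, so $J_{t-1,i_{t-1}}\neq\emptyset$ and hence $i_{t-1}\ge 1$ since $J_{t-1,0}=\emptyset$.
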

\begin{proof}
Assume that $e \in J_{t,i_t}$.
Notice that since $J_{t,0} = \emptyset$, $i_t \ge 1$. 
Since 
\begin{equation*}
e \in J_{t,i_t} = {\rm Ch}_{\rm H}({\rm Ch}_{\rm D}(E \setminus {\sf Q}_{t,i_t-1})) \subseteq 
{\rm Ch}_{\rm D}(E \setminus {\sf Q}_{t,i_t-1}), 
\end{equation*}
this lemma holds. Thus, in what follows, we assume that $e \in {\sf Q}_{t,i_t}$. 

First, we consider the case in which $t = 1$.
Then there exists an integer $\ell \in \{1,2,\ldots,i_1\}$ such that 
$e \in {\sf Q}_{1,\ell} \setminus {\sf Q}_{1,\ell-1}$. 
This implies that $e \in {\rm Ch}_{\rm D}(E \setminus {\sf Q}_{1,\ell-1})$. 

Let $x$ be an integer in $\{2,3,\ldots,k\}$. 
Then we assume that this lemma holds when $t = x-1$, and 
we consider the case in which $t = x$. 
If there exists an integer $\ell \in \{1,2,\ldots,i_x\}$ such that 
$e \in {\sf Q}_{x,\ell} \setminus {\sf Q}_{x,\ell-1}$. 
This implies that $e \in {\rm Ch}_{\rm D}(E \setminus {\sf Q}_{x,\ell-1})$. 
In what follows, we consider the case in which $e \in {\sf Q}_{x,0}$. 
In this case, $e \in {\sf R}_{x-1}$. 
Since 
\begin{equation*}
{\sf R}_{x-1} = {\sf S}_{x-1,j_{x-1}} \subseteq J_{x-1,i_{x-1}} \cup {\sf Q}_{x-1,i_{x-1}},
\end{equation*}
the hypothesis 
implies that there exist integers $y \in \{1,2,\ldots,x-1\}$ and $\ell \in \{0,1,\ldots,i_y-1\}$ such that 
$e \in {\rm Ch}_{\rm D}(E \setminus {\sf Q}_{y,\ell})$. 
This completes the proof. 
\end{proof} 

\begin{lemma} \label{lemma:reject} 
There does not exist a super-stable common independent set $M$ of 
${\bf M}_{\rm D}$ and ${\bf M}_{\rm H}$ such that 
$M \cap {\sf R}_k \neq \emptyset$. 
\end{lemma}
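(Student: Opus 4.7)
The plan is to proceed by contradiction: suppose a super-stable common independent set $M$ with $M \cap {\sf R}_k \neq \emptyset$ exists. The rejection events in Algorithm~\ref{alg:main} (additions to some ${\sf Q}_{t,i}$ at Line~11 and to some ${\sf S}_{t,j}$ at Line~20) are totally ordered; let $e^{\ast}$ denote the element of $M \cap {\sf R}_k$ that is added earliest. By this minimality, every rejection-set snapshot ${\sf Q}_{z,i}$ or ${\sf S}_{z,j}$ that exists strictly before $e^{\ast}$'s rejection is disjoint from $M$. Two cases arise: either (Type~1) $e^{\ast} \in {\rm Ch}_{\rm D}(E \setminus {\sf Q}_{t,i-1}) \setminus J_{t,i}$ at Line~11, or (Type~2) $e^{\ast} \in {\sf tail}_{\rm H}({\sf C}_{\rm H}(b_{t,j}, T_{t,j-1}))$ at Line~20. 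In both cases I extract a circuit $C$ of ${\bf M}_{\rm H}$ with $e^{\ast} \in C$ and $f \succsim_{\rm H} e^{\ast}$ for every $f \in C$: in Type~1 by applying Lemma~\ref{lemma:choice_H} to ${\rm Ch}_{\rm D}(E \setminus {\sf Q}_{t,i-1})$, and in Type~2 by choosing $C = {\sf C}_{\rm H}(b_{t,j}, T_{t,j-1})$ and invoking the definition of ${\sf tail}_{\rm H}$. Since $M \in \mathcal{I}_{\rm H}$ contains $e^{\ast}$ but cannot contain $C$, the set $M_1 := C \setminus M$ is non-empty, and super-stability of $M$ puts each $g \in M_1$ into ${\bf dom}_{\rm D}(M) \cup {\bf dom}_{\rm H}(M)$. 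I split on whether some $g \in M_1$ is ${\bf M}_{\rm D}$-dominated or all of them are ${\bf M}_{\rm H}$-dominated.

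If every $g \in M_1$ lies in ${\bf dom}_{\rm H}(M)$, set $C^{\rm H}_g := {\sf C}_{\rm H}(g, M) \subseteq M + g$, so that $f \succ_{\rm H} g \succsim_{\rm H} e^{\ast}$ for every $f \in C^{\rm H}_g - g$. I invoke Lemma~\ref{lemma:circuit_union} on the circuits $C$ and $\{C^{\rm H}_g\}_{g \in M_1}$ with pivots $g$ and witness $w = e^{\ast}$. The nontrivial checks are that for distinct $g, g' \in M_1$ one has $g \notin C^{\rm H}_{g'} \subseteq M + g'$ (true since $g \notin M$ and $g \neq g'$), and that $e^{\ast} \notin C^{\rm H}_g$ for each $g \in M_1$ (otherwise $e^{\ast} \in {\sf D}_{\rm H}(g, M)$ would force $e^{\ast} \succ_{\rm H} g$, contradicting $g \succsim_{\rm H} e^{\ast}$). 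The lemma yields a circuit $C'$ of ${\bf M}_{\rm H}$ with $e^{\ast} \in C' \subseteq (C \cup \bigcup_{g \in M_1} C^{\rm H}_g) \setminus M_1$; but $C \setminus M_1 = C \cap M \subseteq M$ and $C^{\rm H}_g \setminus M_1 \subseteq (M + g) \setminus \{g\} = M$, so $C' \subseteq M$, contradicting $M \in \mathcal{I}_{\rm H}$.

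Otherwise some $g \in M_1$ lies in ${\bf dom}_{\rm D}(M)$, so ${\sf C}_{\rm D}(g, M) \subseteq M + g$ with $f \succ_{\rm D} g$ for every $f \in {\sf C}_{\rm D}(g, M) - g$. My goal is to exhibit a set $F \subseteq E$ with $M + g \subseteq F$ and $g \in {\rm Ch}_{\rm D}(F)$; Lemma~\ref{lemma:choice_D:upper_circuit} applied to the circuit ${\sf C}_{\rm D}(g, M) \subseteq F$ then gives $g \notin {\rm Ch}_{\rm D}(F)$, a contradiction. In Type~1, $F := E \setminus {\sf Q}_{t,i-1}$ works immediately: $M \cap {\sf Q}_{t,i-1} = \emptyset$ by first-rejection minimality and $g \in C \subseteq {\rm Ch}_{\rm D}(F)$. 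In Type~2, trace $g \in C \subseteq T_{t,j-1} + b_{t,j}$ through the history of the algorithm: either $g \in T_{t,j-1} \subseteq J_{t,i_t} \subseteq {\rm Ch}_{\rm D}(E \setminus {\sf Q}_{t,i_t-1})$, or $g = b_{t,j}$, in which case an iterated appeal to how ${\sf S}_{t,j-1}$-elements were created—together with Lemma~\ref{lemma:R} to handle elements already present in ${\sf Q}_{t,i_t}$—locates indices $z, i$ with $g \in {\rm Ch}_{\rm D}(E \setminus {\sf Q}_{z,i})$ and with ${\sf Q}_{z,i}$ predating the rejection of $e^{\ast}$, whence $M \cap {\sf Q}_{z,i} = \emptyset$. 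The main obstacle is precisely this Type~2 tracing step: verifying that every $b_{t,j}$-type element can be traced back either to some $T$-snapshot or to a ${\sf Q}$-snapshot old enough that first-rejection minimality yields $M \subseteq E \setminus {\sf Q}_{z,i}$, so that Lemma~\ref{lemma:choice_D:upper_circuit} delivers the final contradiction.
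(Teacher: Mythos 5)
Your proposal is correct and takes essentially the same approach as the paper: minimize over rejection events, extract the circuit $C$ via Lemma~\ref{lemma:choice_H} (Type~1) or ${\sf tail}_{\rm H}$ (Type~2), rule out ${\bf dom}_{\rm D}$-domination for $f \in C \setminus M$ using Lemma~\ref{lemma:choice_D:upper_circuit} together with Lemma~\ref{lemma:R} for the Type~2 tracing, and apply Lemma~\ref{lemma:circuit_union} to manufacture a circuit inside $M$. The only cosmetic differences are that you fix one super-stable $M$ and take its earliest rejected element (the paper instead minimizes over ``bad elements'' across all super-stable sets, which is equivalent), and you hoist the ${\bf dom}_{\rm D}$-vs-${\bf dom}_{\rm H}$ case split to the outer level rather than carrying it inside a per-element claim.
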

\begin{proof}
We call an element $e \in {\sf R}_k$ a {\em bad element} if 
there exists a super-stable common independent set $M$ of 
${\bf M}_{\rm D}$ and ${\bf M}_{\rm H}$ such that 
$e \in M$. 
In order to prove this lemma, it is sufficient to prove that 
there does not exist a bad element in ${\sf R}_k$. 

Assume that there exist a bad element in ${\sf R}_k$. 
Let $\Delta$ be the set of integer $t \in \{1,2,\ldots,k\}$ such that 
there exists a bad element in ${\sf R}_t \setminus {\sf R}_{t-1}$. 
We denote by $z$ the minimum integer in $\Delta$. 

First, we consider the case in which 
there exists a bad element in ${\sf Q}_{z,i_z} \setminus {\sf R}_{z-1}$. 
We denote by $\Delta^{\prime}$ the set of integers $i \in \{1,2,\ldots,i_z\}$
such that there exists a bad element in 
${\sf Q}_{z,i} \setminus {\sf Q}_{z,i-1}$. 
Define $x$ as the minimum integer in $\Delta^{\prime}$.
Let $e$ be a bad element in ${\sf Q}_{z,x} \setminus {\sf Q}_{z,x-1}$.
The definition of a bad element in ${\sf R}_k$ implies that 
there exists a super-stable common independent set $M$ 
of ${\bf M}_{\rm D}$ and ${\bf M}_{\rm H}$ such that 
$e \in M$.
Since $e \in {\sf Q}_{z,x} \setminus {\sf Q}_{z,x-1}$, 
$e \in {\rm Ch}_{\rm D}(E \setminus {\sf Q}_{z,x-1}) \setminus J_{z,x}$. 
Define $X := {\rm Ch}_{\rm D}(E \setminus {\sf Q}_{z,x-1})$. 
Then $e \in X \setminus {\rm Ch}_{\rm H}(X)$. 
This and Lemma~\ref{lemma:choice_H} imply that 
there exists a circuit $C$ of ${\bf M}_{\rm H}$ such that 
$C \subseteq {\rm Ch}_{\rm D}(E \setminus {\sf Q}_{z,x-1})$, 
$e \in C$, and $f \succsim_{\rm H} e$ for every 
element $f \in C$. 

\begin{claim} \label{claim1:lemma:reject}
For every element $f \in C \setminus M$, 
there exists a circuit $C_f$ of ${\bf M}_{\rm H}$ such that 
\begin{equation*}
f \in C_f, \ \ e \notin C_f, \ \ C_f - f \subseteq M.
\end{equation*} 
\end{claim}
\begin{proof}
Let $f$ be an element in $C \setminus M$. 
Since $e \in M$, $f \neq e$. 

First, we prove that if $f \in {\sf span}_{\rm D}(M)$, then 
\begin{equation} \label{eq1:claim1:lemma:reject}
{\sf D}_{\rm D}(f,M) \subseteq E \setminus {\sf Q}_{z,x-1}.
\end{equation} 
Assume that 
\eqref{eq1:claim1:lemma:reject} does not hold, that is, 
${\sf D}_{\rm D}(f,M) \cap {\sf Q}_{z,x-1} \neq \emptyset$. 
Let $g$ be an element in ${\sf D}_{\rm D}(f,M) \cap {\sf Q}_{z,x-1}$. 
Since $g \in M \cap {\sf Q}_{z,x-1}$, 
$g$ is a bad element in ${\sf R}_k$. 
This contradicts the definition of $x$. 

Next, we prove that if 
\begin{equation} \label{eq2:claim1:lemma:reject}
f \notin {\bf dom}_{\rm D}(M),
\end{equation}
then the proof is done. 
Assume that \eqref{eq2:claim1:lemma:reject} holds.
Then since $M$ is super-stable, 
$f \in {\bf dom}_{\rm H}(M)$, i.e.,  
$f \in {\sf span}_{\rm H}(M)$ and 
$g \succ_{\rm H} f$ for every element $g \in {\sf D}_{\rm H}(f,M)$. 
Since $f \succsim_{\rm H} e$ and $f \neq e$, 
we have $e \notin {\sf C}_{\rm H}(f,M)$. 
This implies that ${\sf C}_{\rm H}(f,M)$ 
satisfies the conditions in this claim. 

Here we prove \eqref{eq2:claim1:lemma:reject}.
Assume that \eqref{eq2:claim1:lemma:reject} does not hold.
Then 
$f \in {\sf span}_{\rm D}(M)$ and 
$g \succ_{\rm D} f$ for every element $g \in {\sf D}_{\rm D}(f,M)$.
Since $f \in {\rm Ch}_{\rm D}(E \setminus {\sf Q}_{z,x-1})$, 
\eqref{eq1:claim1:lemma:reject} implies that 
${\sf C}_{\rm D}(f,M) \subseteq E\setminus {\sf Q}_{z,x-1}$. 
Thus,
Lemma~\ref{lemma:choice_D:upper_circuit} implies that 
$f \notin {\rm Ch}_{\rm D}(E \setminus {\sf Q}_{z,x-1})$.
This contradicts the fact that 
$f \in {\rm Ch}_{\rm D}(E \setminus {\sf Q}_{z,x-1})$.
This completes the proof. 
\end{proof} 

Lemma~\ref{lemma:circuit_union} and Claim~\ref{claim1:lemma:reject} imply that 
there exists a circuit $C^{\prime}$ of ${\bf M}_{\rm H}$ such that 
\begin{equation*}
C^{\prime} \subseteq \Big(C \cup \Big(\bigcup_{f \in C \setminus M}C_f\Big)\Big)
\setminus (C \setminus M) \subseteq M.
\end{equation*}
This contradicts the fact that $M \in \mathcal{I}_{\rm H}$. 

Next, we consider the case in which 
there does not exist a bad element in ${\sf Q}_{z,i_z} \setminus {\sf R}_{z-1}$. 
In this case, there exists a bad element in ${\sf S}_{z,j_z} \setminus {\sf Q}_{z,i_z}$. 
We denote by $\Delta^{\prime}$ the set of integers $j \in \{1,2,\ldots,j_z\}$
such that there exists a bad element in 
${\sf S}_{z,j} \setminus {\sf S}_{z,j-1}$. 
Define $x$ as the minimum integer in $\Delta^{\prime}$.
Let $e$ be a bad element in ${\sf S}_{z,x} \setminus {\sf S}_{z,x-1}$.
Then the definition of a bad element implies that there exists 
a super-stable common independent set $M$ 
of ${\bf M}_{\rm D}$ and ${\bf M}_{\rm H}$ such that 
$e \in M$.
Since $e \in {\sf S}_{z,x} \setminus {\sf S}_{z,x-1}$, 
\begin{equation*}
e \in {\sf tail}_{\rm H}({\sf C}_{\rm H}(b_{z,x}, T_{z,x-1})).
\end{equation*} 
Define $C := {\sf C}_{\rm H}(b_{z,x}, T_{z,x-1})$. Then 
$e \in C$ and 
$f \succsim_{\rm H} e$ for every element $f \in C$. 

\begin{claim} \label{claim2:lemma:reject}
For every element $f \in C \setminus M$, 
there exists a circuit $C_f$ of ${\bf M}_{\rm H}$ such that 
\begin{equation*}
f \in C_f, \ \ e \notin C_f, \ \ C_f - f \subseteq M.
\end{equation*} 
\end{claim}
\begin{proof}
Let $f$ be an element in $C \setminus M$. 
Since $e \in M$, $f \neq e$. 

First, we prove that if $f \in {\sf span}_{\rm D}(M)$, then 
\begin{equation} \label{eq1:claim2:lemma:reject}
{\sf D}_{\rm D}(f,M) \subseteq E \setminus {\sf S}_{z,x-1}.
\end{equation} 
Assume that 
\eqref{eq1:claim2:lemma:reject} does not hold, that is, 
${\sf D}_{\rm D}(f,M) \cap {\sf S}_{z,x-1} \neq \emptyset$. 
Let $g$ be an element in ${\sf D}_{\rm D}(f,M) \cap {\sf S}_{z,x-1}$. 
Since $g \in M \cap {\sf S}_{z,x-1}$, 
$g$ is a bad element in ${\sf R}_k$. 
This contradicts the definition of $x$. 

Next, we prove that if 
\begin{equation} \label{eq2:claim2:lemma:reject}
f \notin {\bf dom}_{\rm D}(M),
\end{equation}
then the proof is done. 
Assume that \eqref{eq2:claim2:lemma:reject} holds.
Then since $M$ is super-stable, 
$f \in {\bf dom}_{\rm H}(M)$, i.e.,  
$f \in {\sf span}_{\rm H}(M)$ and 
$g \succ_{\rm H} f$ for every element $g \in {\sf D}_{\rm H}(f,M)$. 
Since $f \succsim_{\rm H} e$ and $f \neq e$, 
we have $e \notin {\sf C}_{\rm H}(f,M)$. 
This implies that ${\sf C}_{\rm H}(f,M)$ 
satisfies the conditions in this claim. 

What remains is to prove 
\eqref{eq2:claim2:lemma:reject}.
In what follows, we assume that 
\eqref{eq2:claim2:lemma:reject} does not hold.
This implies that 
$f \in {\sf span}_{\rm D}(M)$ and 
$g \succ_{\rm D} f$ for every element $g \in {\sf D}_{\rm D}(f,M)$.

Recall that $f \in T_{z,x-1} + b_{z,x}$. 
Thus, since $b_{z,x} \in {\sf S}_{z,x-1}$, 
\begin{equation*}
f \in T_{z,x-1} + b_{z,x} \subseteq J_{z,i_z} \cup {\sf Q}_{z,i_z}. 
\end{equation*}
Thus, 
Lemma~\ref{lemma:R}
implies that there exist integers $y \in \{1,2,\ldots, z\}$ and 
$i \in \{0,1,\ldots,i_y-1\}$ such that 
$f \in {\rm Ch}_{\rm D}(E \setminus {\sf Q}_{y,i})$. 
Since $f \in {\rm Ch}_{\rm D}(E \setminus {\sf Q}_{y,i})$, 
$f \in E \setminus {\sf Q}_{y,i}$. 
Since ${\sf Q}_{y,i} \subseteq {\sf S}_{z,x-1}$, 
\eqref{eq1:claim2:lemma:reject} implies that
${\sf D}_{\rm D}(f,M) \subseteq E \setminus {\sf Q}_{y,i}$.
Thus, 
${\sf C}_{\rm D}(f,M) \subseteq E \setminus {\sf Q}_{y,i}$. 
This and Lemma~\ref{lemma:choice_D:upper_circuit} 
imply that 
$f \notin {\rm Ch}_{\rm D}(E \setminus {\sf Q}_{y,i})$. 
However, this contradicts the fact that 
$f \in {\rm Ch}_{\rm D}(E \setminus {\sf Q}_{y,i})$.
This completes the proof.
\end{proof} 

Lemma~\ref{lemma:circuit_union} and Claim~\ref{claim2:lemma:reject} imply that 
there exists a circuit $C^{\prime}$ of ${\bf M}_{\rm H}$ such that 
\begin{equation*}
C^{\prime} \subseteq \Big(C \cup \Big(\bigcup_{f \in C \setminus M}C_f\Big)\Big)
\setminus (C \setminus M) \subseteq M.
\end{equation*}
This contradicts the fact that $M \in \mathcal{I}_{\rm H}$. 
This completes the proof. 
\end{proof} 

\begin{lemma} \label{lemma:null} 
If Algorithm~\ref{alg:main} outputs {\bf null}, 
then there does not exist a super-stable common independent set of 
${\bf M}_{\rm D}$ and ${\bf M}_{\rm H}$. 
\end{lemma}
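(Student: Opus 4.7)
The plan is to assume for contradiction that a super-stable common independent set $M$ of ${\bf M}_{\rm D}$ and ${\bf M}_{\rm H}$ exists. Lemma~\ref{lemma:reject} then forces $M \cap {\sf R}_k = \emptyset$, i.e.\ $M \subseteq E \setminus {\sf R}_k$. The two lines on which Algorithm~\ref{alg:main} may output {\bf null} (Lines~26 and 28) are handled separately, but both are driven by a single structural fact about the interaction of $I_k$ and $M$.

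I would establish first that every $h \in I_k \setminus M$ lies in ${\sf span}_{\rm H}(M)$. Indeed $h \in (E \setminus {\sf R}_k) \setminus M$, so super-stability of $M$ gives $h \in {\bf dom}_{\rm D}(M) \cup {\bf dom}_{\rm H}(M)$. The possibility $h \in {\bf dom}_{\rm D}(M)$ is ruled out using the structure of $I_k = {\rm Ch}_{\rm D}(E \setminus {\sf R}_k)$: if $j'$ is the head index with $h \in H_{j'}$, then $h \in P_{j'}$ forces $\{h\} \in \mathcal{I}({\bf N}_{j'-1})$, hence $h \notin {\sf span}_{\rm D}(H_{1,j'-1})$ in ${\bf M}_{\rm D}$; but $h \in {\bf dom}_{\rm D}(M)$ would place the circuit ${\sf C}_{\rm D}(h,M)$ inside $H_{1,j'-1} + h$ (each of its other elements is strictly preferred to $h$ and therefore sits in an earlier head), contradicting the previous sentence. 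Thus $h \in {\bf dom}_{\rm H}(M) \subseteq {\sf span}_{\rm H}(M)$. Matroid augmentation (I2) in ${\bf M}_{\rm H}$ applied to $M$ and $I_k$ now gives $|M| \ge |I_k|$.

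If the algorithm halts at Line~26, so $I_k \notin \mathcal{I}_{\rm D}$, then since $I_k \supseteq B^{\ast}$ by Lemma~\ref{lemma:b_subset} while $I_k \in \mathcal{I}_{\rm D}$ would force $I_k = B^{\ast}$, we obtain $|I_k| > |B^{\ast}| = {\sf r}_{{\bf M}_{\rm D}}(E \setminus {\sf R}_k) \ge |M|$, contradicting $|M| \ge |I_k|$. If it halts at Line~28, so $I_k \in \mathcal{I}_{\rm D}$ and some $e_{\rm R} \in {\sf R}_k$ satisfies $I_k + e_{\rm R} \in \mathcal{I}_{\rm H}$, then $e_{\rm R} \notin M$ and super-stability gives $e_{\rm R} \in {\bf dom}_{\rm D}(M) \cup {\bf dom}_{\rm H}(M)$. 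In the subcase $e_{\rm R} \in {\bf dom}_{\rm D}(M)$, the circuit ${\sf C}_{\rm D}(e_{\rm R}, M)$ lies in $M + e_{\rm R}$; Lemma~\ref{lemma:R} applied to $e_{\rm R} \in {\sf R}_k \subseteq J_{k,i_k} \cup {\sf Q}_{k,i_k}$ produces $z, i$ with $e_{\rm R} \in {\rm Ch}_{\rm D}(E \setminus {\sf Q}_{z,i})$, and since ${\sf Q}_{z,i} \subseteq {\sf R}_k$ one has $M + e_{\rm R} \subseteq E \setminus {\sf Q}_{z,i}$, so Lemma~\ref{lemma:choice_D:upper_circuit} contradicts $e_{\rm R} \in {\rm Ch}_{\rm D}(E \setminus {\sf Q}_{z,i})$.

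The technically hardest subcase is $e_{\rm R} \in {\bf dom}_{\rm H}(M)$. Let $D := {\sf C}_{\rm H}(e_{\rm R}, M)$; if $D - e_{\rm R} \subseteq I_k$, then $D \subseteq I_k + e_{\rm R}$ already contradicts $I_k + e_{\rm R} \in \mathcal{I}_{\rm H}$, so I may assume $(D - e_{\rm R}) \setminus I_k = \{g_1, \ldots, g_{\ell}\}$ is non-empty, and this set is contained in $M \setminus I_k$. To use Lemma~\ref{lemma:circuit_union} I must exhibit, for each $g_i$, a ${\bf M}_{\rm H}$-circuit $C_{g_i}$ with $g_i \in C_{g_i}$ and $C_{g_i} - g_i \subseteq I_k$, which amounts to showing $g_i \in {\sf span}_{\rm H}(I_k)$. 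Here I use that in this subcase $I_k = B^{\ast}$ is a base of ${\bf M}_{\rm D}|(E \setminus {\sf R}_k)$ (Lemma~\ref{lemma:b_subset}), so $|M| \le {\sf r}_{{\bf M}_{\rm D}}(E \setminus {\sf R}_k) = |I_k|$; combined with $|M| \ge |I_k|$ this yields $|M| = |I_k|$. A symmetric augmentation argument in ${\bf M}_{\rm H}$ --- if some $g \in M \setminus I_k$ had $I_k + g \in \mathcal{I}_{\rm H}$, then $|I_k + g| > |M|$ and (I2) would furnish $e \in I_k \setminus M$ with $M + e \in \mathcal{I}_{\rm H}$, contradicting $I_k \setminus M \subseteq {\sf span}_{\rm H}(M)$ --- then gives $M \setminus I_k \subseteq {\sf span}_{\rm H}(I_k)$. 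Thus $C_{g_i} := {\sf C}_{\rm H}(g_i, I_k) \subseteq I_k + g_i$ is well defined, and since $e_{\rm R} \in {\sf R}_k$ while $I_k + g_i \subseteq E \setminus {\sf R}_k$, we get $e_{\rm R} \notin C_{g_i}$. Lemma~\ref{lemma:circuit_union} applied with $C = D$, the circuits $C_{g_i}$, $u_i = g_i$ and $w = e_{\rm R}$ then produces a ${\bf M}_{\rm H}$-circuit inside $I_k + e_{\rm R}$, the final contradiction. Assembling the rank equality $|M| = |I_k|$, the two augmentation arguments in ${\bf M}_{\rm H}$, and the circuit-union step simultaneously in this subcase is the main technical obstacle.
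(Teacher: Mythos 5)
Your proof is correct, and the overall skeleton matches the paper's: invoke Lemma~\ref{lemma:reject} to place $M$ inside $E \setminus {\sf R}_k$, show that elements of $I_k \setminus M$ (and, in the Line~28 case, $e_{\rm R}$) must be ${\bf M}_{\rm H}$-spanned by $M$ because the ${\bf M}_{\rm D}$-side ``dominated'' alternative is killed by Lemma~\ref{lemma:choice_D:upper_circuit} (via Lemma~\ref{lemma:R} when the element lies in ${\sf R}_k$), and then derive a contradiction with the independence of a set of size exceeding $|M|$ in ${\bf M}_{\rm H}$. Where you genuinely differ is in how that final contradiction is extracted. The paper packages the contradiction into a single statement --- Claim~\ref{claim2:lemma:null} plus a bespoke circuit-elimination procedure (Algorithm~\ref{alg:circuit}) whose invariants (D1)--(D5) it verifies by induction to exhibit a circuit of ${\bf M}_{\rm H}$ inside the constructed $L$. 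You instead apply the augmentation axiom (I2) directly: $|L| > |M|$, both in $\mathcal{I}_{\rm H}$, so some $u \in L \setminus M$ has $M + u \in \mathcal{I}_{\rm H}$, contradicting $L \setminus M \subseteq {\sf span}_{\rm H}(M)$. This shortcut is valid and in fact shows the paper's Algorithm~\ref{alg:circuit} is stronger machinery than the proof requires.

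One remark on your own Line~28, subcase $e_{\rm R} \in {\bf dom}_{\rm H}(M)$: the construction via Lemma~\ref{lemma:circuit_union}, while correct, is also more than you need. Once you are in that subcase you have $e_{\rm R} \in {\sf span}_{\rm H}(M)$ for free (${\bf dom}_{\rm H}(M) \subseteq {\sf span}_{\rm H}(M)$ by definition), and together with the already-established $I_k \setminus M \subseteq {\sf span}_{\rm H}(M)$ this gives $(I_k + e_{\rm R}) \setminus M \subseteq {\sf span}_{\rm H}(M)$; since $|I_k + e_{\rm R}| = |I_k| + 1 > |M|$ the same (I2) argument you used for Line~26 closes the case immediately. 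So the ``main technical obstacle'' you identify --- assembling the rank equality $|M| = |I_k|$, the two augmentation arguments, and the circuit-union step --- can be dissolved entirely; you never need to show $M \setminus I_k \subseteq {\sf span}_{\rm H}(I_k)$ or to invoke Lemma~\ref{lemma:circuit_union} at all.

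A small imprecision worth flagging: in your elimination of $h \in {\bf dom}_{\rm D}(M)$ for $h \in I_k \setminus M$ you write ``$h \notin {\sf span}_{\rm D}(H_{1,j'-1})$,'' but ${\sf span}$ as defined in the paper takes an independent set as its argument and $H_{1,j'-1}$ need not be independent. The intended statement is that $\{h\}$ is independent in ${\bf M}_{\rm D}/H_{1,j'-1}$, which together with ${\sf C}_{\rm D}(h,M) \subseteq H_{1,j'-1} + h$ gives the contradiction; equivalently, you could simply cite Lemma~\ref{lemma:choice_D:upper_circuit} with $F = E \setminus {\sf R}_k$ and $C = {\sf C}_{\rm D}(h,M)$ and avoid re-deriving it.
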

\begin{proof} 
Assume that Algorithm~\ref{alg:main} outputs {\bf null}.
Furthermore, we assume that there exists a super-stable common independent set $M$ of 
${\bf M}_{\rm D}$ and ${\bf M}_{\rm H}$. 

\begin{claim} \label{claim1:lemma:null} 
There exists an independent set $L$ of ${\bf M}_{\rm H}$ 
such that $L \subseteq I_k \cup {\sf R}_k$ and $|L| > |M|$. 
\end{claim} 
\begin{proof} 
Since 
Lemma~\ref{lemma:reject} implies that 
$M \subseteq E \setminus {\sf R}_k$, 
$M$ is an independent set of 
${\bf M}_{\rm D}|(E \setminus {\sf R}_k)$. 

First, we assume that 
Algorithm~\ref{alg:main} outputs {\bf null} in Line~26. 
In this case, we define $L := I_k$.
The definition of ${\rm Ch}_{\rm H}(\cdot)$ implies that 
$L \in \mathcal{I}_{\rm H}$. 
Lemma~\ref{lemma:b_subset} implies that 
$I_k$ contains a base of ${\bf M}_{\rm D}|(E \setminus {\sf R}_k)$. 
Thus, since $I_k \notin \mathcal{I}_{\rm D}$, 
$|I_k| > {\sf r}_{{\bf M}_{\rm D}}(E \setminus {\sf R}_k)$. 
This implies that 
\begin{equation*}
|L| = |I_k| > {\sf r}_{{\rm M}_{\rm D}}(E \setminus {\sf R}_k) \ge |M|.
\end{equation*} 

Next, we assume that 
Algorithm~\ref{alg:main} outputs {\bf null} in Line~28. 
In this case, we define $L := I_k + e_{\rm R}$. 
The definition of $e_{\rm R}$ implies that 
$L \in \mathcal{I}_{\rm H}$. 
Since $e_{\rm R} \in {\sf R}_k$, 
$L \subseteq I_k \cup {\sf R}_k$. 
Lemma~\ref{lemma:b_subset} implies that 
$I_k$ contains a base of ${\bf M}_{\rm D}|(E \setminus {\sf R}_k)$. 
This implies that $|I_k| \ge |M|$. 
Thus, since $e_{\rm R} \notin I_k$ follows from 
the fact that $I_k \subseteq E \setminus {\sf R}_k$, we have 
\begin{equation*}
|L| = |I_k + e_{\rm R}| = |I_k| + 1 > |I_k| \ge |M|.
\end{equation*} 
This completes the proof. 
\end{proof} 

In what follows, let $L$ be an independent set of ${\bf M}_{\rm H}$ 
such that $L \subseteq I_k \cup {\sf R}_k$ and $|L| > |M|$. 

\begin{claim} \label{claim2:lemma:null} 
$e \in {\sf span}_{\rm H}(M)$ 
for every element $e \in L \setminus M$.
\end{claim}
\begin{proof}
Let $e$ be an element in $L \setminus M$. 
Since 
\begin{equation*}
e \in L \subseteq I_k \cup {\sf R}_k  
= T_{k,j_k} \cup {\sf S}_{k,j_k} 
= J_{k,i_k} \cup {\sf Q}_{k,i_k}, 
\end{equation*}
it follows from Lemma~\ref{lemma:R} 
that 
there exist integers $t \in \{1,2,\ldots, k\}$ and 
$i \in \{0,1,\ldots,i_t-1\}$ such that 
$e \in {\rm Ch}_{\rm D}(E \setminus {\sf Q}_{t,i})$.
This implies that $e \in E \setminus {\sf Q}_{t,i}$. 

Assume that $e \in {\bf dom}_{\rm D}(M)$, 
i.e.,  $e \in {\sf span}_{\rm D}(M)$ and 
$f \succ_{\rm D} e$ for every 
element $f \in {\sf D}_{\rm D}(e,M)$. 
Since $M \subseteq E \setminus {\sf R}_k$, 
${\sf D}_{\rm D}(e,M) \subseteq E \setminus {\sf R}_k$. 
Since ${\sf Q}_{t,i} \subseteq {\sf R}_k$, 
${\sf C}_{\rm D}(e,M) \subseteq E \setminus {\sf Q}_{t,i}$.
This and Lemma~\ref{lemma:reject} imply that 
$e \notin  {\rm Ch}_{\rm D}(E \setminus {\sf Q}_{t,i})$.
This is a contradiction. 
Thus, $e \notin {\bf dom}_{\rm D}(M)$. 

Since $M$ is super-stable, 
$e \in {\bf dom}_{\rm H}(M)$.
This implies that 
$e \in {\sf span}_{\rm H}(M)$. 
\end{proof} 

In what follows, we prove that there exists a circuit $C$ of ${\bf M}_{\rm H}$ 
such that $C \subseteq L$. 
This contradicts the fact that $L \in \mathcal{I}_{\rm H}$. 
Thus, this completes the proof. 
Here we consider Algorithm~\ref{alg:circuit}. 
Notice that Claim~\ref{claim2:lemma:null} implies that 
$C_1^e$ is well-defined for every element $e \in L \setminus M$. 

\begin{algorithm}[h]
Define $C_1^e := {\sf C}_{\rm H}(e,M)$ for each element $e \in L \setminus M$. \\
Define $\mathcal{C}_1 := \{C_1^e \mid e \in L \setminus M\}$, 
$A_1 := L \setminus M$, and $V_1 := M \setminus L$. \\
Set $i := 1$. \\ 
\While{$C \not\subseteq L$ holds for every circuit $C \in \mathcal{C}_{i}$}
{
        Define $v_i$ as an element in $V_{i}$ which is contained in 
        at least two distinct circuits in $\mathcal{C}_{i}$. \\
        Let $a_i$ be an element in $A_i$ such that $v_i \in C_{i}^{a_i}$.  \\
        \For{each element $e \in A_i - a_i$}
        {
            \uIf{$v_i \in C_{i}^e$}
            {
                Define $C_{i+1}^e$ as a circuit $C$ of ${\bf M}_{\rm H}$ such that 
                $e \in C \subseteq (C_{i}^e \cup C_{i}^{a_i}) - v_i$.
            }
            \Else
            {
                Define $C_{i+1}^e := C_{i}^e$. 
            } 
        }
        Define $A_{i+1} := A_{i} - a_i$ and $V_{i+1} := V_{i} - v_i$. \\
        Define $\mathcal{C}_{i+1} := \{C_{i+1}^e \mid e \in A_{i+1}\}$. \\
        Set $i := i+1$. \\
}
Output a circuit $C \in \mathcal{C}_i$ such that $C \subseteq L$ and halt.
\caption{Algorithm for the proof of Lemma~\ref{lemma:null}}  
\label{alg:circuit} 
\end{algorithm}

\begin{claim} \label{claim5:lemma:null}
In the course of Algorithm~\ref{alg:circuit}, 
the following statements hold. 
\begin{description}
\setlength{\parskip}{0mm} 
\setlength{\itemsep}{1mm} 
\item[(D1)]
For every element $e \in A_i$, 
$C_i^e \cap (M \setminus L) \subseteq V_i$. 
\item[(D2)]
$|A_i| > |V_i|$. 
\item[(D3)]
For every element $e \in A_i$, 
$C_i^e \cap A_i = \{e\}$.
\end{description}
Furthermore, if 
$C \not\subseteq L$ holds for every 
circuit $C \in \mathcal{C}_{i}$, 
then the following statements hold. 
\begin{description}
\setlength{\parskip}{0mm} 
\setlength{\itemsep}{1mm} 
\item[(D4)]
There exists an element in $V_i$ which is 
contained in at least two distinct circuits in $\mathcal{C}_{i}$. 
\item[(D5)] 
In Line~9, there exists a circuit $C$ of ${\bf M}_{\rm H}$ 
such that $e \in C \subseteq (C_{i}^e \cup C_{i}^{a_i}) - v_i$. 
\end{description}
\end{claim}
\begin{proof}
First, we consider the case in which $i = 1$. 
Since $V_1 = M \setminus L$, 
(D1) holds. 
Since $|L \setminus M| > |M \setminus L|$, 
(D2) holds.
Since $A_1 = L \setminus M$ and 
$C_1^e - e \subseteq M$ for every element $e \in L \setminus M$,
(D3) holds. 
Notice that (D3) implies that 
$C_1^e \neq C_1^{\overline{e}}$ holds for every pair of distinct 
elements $e,\overline{e} \in A_1$. 
Thus, since every circuit in $\mathcal{C}_1$ contains at least one element in $M \setminus L$, 
(D4) follows from (D1) and (D2). 
For every element $e \in A_1 - a_1$, 
since $e \neq a_1$, (D3) implies that 
$C_1^e \neq C_1^{a_1}$.
Thus, 
Lemma~\ref{lemma:strong_elimination} implies that 
(D5) holds. 

Next, we assume that the statements in this claim hold when $i = x$ for some 
positive integer $x$, and 
we prove that 
the statements in this claim hold when $i = x + 1$. 
For every element $e \in A_{x+1}$, 
$C_{x}^e \cap (M \setminus L), C_{x}^{a_{x}} \cap (M \setminus L) \subseteq V_{x}$ 
and $v_{x} \notin C_{x+1}^e$. 
Thus, since $V_{x + 1} = V_{x} - v_{x}$, 
(D1) holds.
Since 
$|A_{x+1}| = |A_{x}| - 1$
and 
$|V_{x+1}| = |V_{x}| - 1$,
(D2) follows from $|A_{x}| > |V_{x}|$. 
For every element $e \in A_{x+1}$, 
$C_{x+1}^e \cap A_{x} \subseteq \{e,a_{x}\}$.
Thus, since $A_{x+1} = A_{x}-a_{x}$, 
(D3) holds. 
Notice that (D3) implies that 
$C_{x+1}^{e} \neq C_{x+1}^{\overline{e}}$ holds 
for every pair of distinct elements $e,\overline{e} \in A_{x+1}$. 
Thus, since every circuit in $\mathcal{C}_{x+1}$ 
contains at least one element in $M \setminus L$, 
(D4) follows from (D1) and (D2). 
For every element $e \in A_{x+1} - a_{x+1}$, 
since $e \neq a_{x+1}$, 
(D3) implies that 
$C_{x+1}^e \neq C_{x+1}^{a_{x+1}}$. 
Thus, Lemma~\ref{lemma:strong_elimination} implies that 
(D5) holds. 
This completes the proof. 
\end{proof}

In the course of Algorithm~\ref{alg:circuit}, 
$|A_{i+1}| < |A_i|$. 
Assume that in the course of Algorithm~\ref{alg:circuit},
$|A_i| = 1$ holds for some positive integer $i$.  
Then (D2) of Claim~\ref{claim5:lemma:null} implies that 
$V_i = \emptyset$. 
Let $C$ be the unique circuit in $\mathcal{C}_i$. 
Then (D1) of Claim~\ref{claim5:lemma:null} implies that 
$C \cap (M \setminus L) = \emptyset$. 
Thus, since $C \subseteq L \cup M$, 
$C \subseteq L$. 
This implies that 
Algorithm~\ref{alg:circuit} outputs 
a circuit $C$ of ${\bf M}_{\rm H}$ such that $C \subseteq L$ and halts.
This completes the proof.
\end{proof} 

\begin{theorem}
Algorithm~\ref{alg:main} can solve 
the super-stable common independent set problem. 
\end{theorem}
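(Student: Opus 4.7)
The plan is to observe that the theorem is essentially a packaging result: all substantive content has already been established in the preceding lemmas, and the proof just has to combine them according to which branch Algorithm~\ref{alg:main} takes at termination.

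First I would argue polynomial-time termination. Lemma~\ref{lemma1:iteration} shows that the inner loop at Lines~7--11 runs $O(|E|)$ times per outer iteration, Lemma~\ref{lemma2:iteration} shows the same for the loop at Lines~15--20, and Lemma~\ref{lemma3:iteration} shows that the outer loop at Lines~3--23 runs $O(|E|)$ times in total. Combined with the polynomial-time computability of ${\rm Ch}_{\rm D}(\cdot)$, ${\rm Ch}_{\rm H}(\cdot)$, and ${\sf block}_{\rm H}(\cdot)$ established in the discussion following Algorithms~\ref{alg:choice_D} and \ref{alg:choice_H}, and the polynomial-time checks in Lines~25 and 27 (membership in $\mathcal{I}_{\rm D}$ and $\mathcal{I}_{\rm H}$), this gives an overall polynomial running time.

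Next I would handle correctness by case analysis on the output. If Algorithm~\ref{alg:main} outputs $I_k$ (Line~30), then Lemma~\ref{lemma:stable} immediately asserts that $I_k$ is a super-stable common independent set of ${\bf M}_{\rm D}$ and ${\bf M}_{\rm H}$, so the algorithm's answer is correct. If instead Algorithm~\ref{alg:main} outputs {\bf null} (either in Line~26 or in Line~28), then Lemma~\ref{lemma:null} guarantees that no super-stable common independent set exists, so again the answer is correct.

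Since these are the only two possible outputs of the algorithm, combining the above observations completes the proof. The main obstacle has already been discharged in Lemma~\ref{lemma:null}, whose proof relied on the key structural fact (Lemma~\ref{lemma:reject}) that every element rejected into ${\sf R}_k$ is excluded from any super-stable common independent set; given these lemmas, the theorem itself requires only routine bookkeeping.
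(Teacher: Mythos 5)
Your proposal is correct and takes essentially the same approach as the paper: the paper's proof of the theorem is just the single sentence that it follows from Lemmas~\ref{lemma:stable} and \ref{lemma:null}, with the polynomial-time claim (via Lemmas~\ref{lemma1:iteration}--\ref{lemma3:iteration}) stated separately just before Section~4. You have simply spelled out the same case analysis and bookkeeping in more detail.
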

\begin{proof} 
This theorem follows from 
Lemmas~\ref{lemma:stable} and 
\ref{lemma:null}. 
\end{proof}

\appendix

\section{The Student-Project Allocation Problem with Ties} 
\label{appendix:spa} 

Here we show that the problem of finding a super-stable matching in 
the student-project allocation problem with ties, which 
was considered by Olaosebikan and Manlove~\cite{OM20+}, is a special case of 
the super-stable common independent set problem. 

The {\em student-project allocation problem with ties} is defined as follows. 
We are given a finite set $S = \{1,2,\ldots,n\}$ of {\em students}, a finite set $P$ of 
{\em projects}, and a finite set $L = \{1,2,\ldots,m\}$ of {\em lecturers}. 
The set $P$ is partitioned into  
$P_1, P_2, \ldots, P_m$. 
For each integer $\ell \in \{1,2,\ldots,m\}$, 
$P_{\ell}$ represents the set of projects which 
the lecturer $\ell$ offers. 
Furthermore, we are given a subset $\mathcal{A} \subseteq S \times P$. 
For each subset $M \subseteq \mathcal{A}$ and 
each student $s \in S$ (resp., project $p \in P$), we define 
$M(s)$ (resp., $M(p)$) 
the set of ordered pairs $(s^{\prime},p^{\prime}) \in M$ 
such that $s^{\prime} = s$ (resp., $p^{\prime} = p$). 
For each subset $M \subseteq \mathcal{A}$ and 
each lecturer $\ell \in L$, we define 
$M(\ell) := \bigcup_{p \in P_{\ell}}M(p)$. 
For each student $s \in S$, we are given a 
transitive and complete binary relation $\succsim_s$ on $\mathcal{A}(s)$. 
For each lecturer $\ell \in L$, we are given a 
transitive and complete binary relation $\succsim_{\ell}$ on $S$. 
Furthermore, we are given capacity functions 
$c_P \colon P \to \mathbb{Z}_+$ and 
$c_L \colon L \to \mathbb{Z}_+$, where 
$\mathbb{Z}_+$ denotes the set of non-negative integers. 

A subset $M \subseteq \mathcal{A}$ is called a {\em matching} if 
$|M(s)| \le 1$ for every student $s \in S$, 
$|M(p)| \le c_P(p)$ for every project $p \in P$, and 
$|M(\ell)| \le c_L(\ell)$ for every lecturer $\ell \in L$.
For each matching $M$ and each ordered pair $(s,p) \in \mathcal{A} \setminus M$, 
we say that $(s,p)$ {\em blocks} $M$ if the following conditions are satisfied.
\begin{itemize}
\setlength{\parskip}{0mm} 
\setlength{\itemsep}{1mm} 
\item
For every student $s \in S$, one of the following conditions is satisfied.
\begin{enumerate}
\setlength{\parskip}{0mm} 
\setlength{\itemsep}{1mm} 
\item
$M(s) = \emptyset$.
\item
$|M(s)| = 1$, and $(s,p) \succsim_s (s,q)$ for the ordered pair $(s,q) \in M(s)$. 
\end{enumerate}
\item
Let $\ell$ be the lecturer in $L$ such that 
$p \in P_{\ell}$. 
Then one of the following conditions is satisfied. 
\begin{enumerate}
\setlength{\parskip}{0mm} 
\setlength{\itemsep}{1mm} 
\item
$|M(p)| < c_P(p)$
and $|M(\ell)| < c_L(\ell)$. 
\item
$|M(p)| = c_P(p)$ and 
$s \succsim_{\ell} t$ for some ordered pair $(t,p) \in M(p)$. 
\item
$|M(p)| < c_P(p)$, 
$|M(\ell)| = c_L(\ell)$, and 
$s \succsim_{\ell} t$ for some ordered pair $(t,q) \in M(\ell)$. 
\end{enumerate}
\end{itemize}
A matching $M$ is said to be {\em super-stable} if 
there does not exist an ordered pair in $\mathcal{A} \setminus M$
which blocks $M$. 

We can reduce the problem of finding a super-stable matching 
in the student-project allocation problem with ties
to the super-stable common independent set problem as follows. 
Define $E := \mathcal{A}$. 
Define $\mathcal{I}_{\rm D}$ as the family of subsets 
$F \subseteq \mathcal{A}$ such that 
$|F(s)| \le 1$ for every student $s \in S$.  
Define $\mathcal{I}_{\rm H}$ as the 
family of subsets $F \subseteq \mathcal{A}$ 
such that $|F(p)| \le c_P(p)$ for every 
project $p \in P$ and 
$|F(\ell)| \le c_L(\ell)$
for every lecturer $\ell \in L$. 
Define $\succsim_{\rm D}$ as follows. 
\begin{itemize}
\setlength{\parskip}{0mm} 
\setlength{\itemsep}{1mm} 
\item
For each student $s \in S$ and each pair of 
ordered pairs $(s,p), (s,q) \in \mathcal{A}(s)$,  
$(s,p) \succsim_{\rm D} (s,q)$ if and only if 
$(s,p) \succsim_s (s,q)$. 
\item
For each pair of distinct students $s_1,s_2 \in S$ such that 
$s_1 < s_2$ and each pair of ordered pairs $e \in \mathcal{A}(s_1)$
and $f \in \mathcal{A}(s_2)$,  
$e \succ_{\rm D} f$.
\end{itemize}
Define $\succsim_{\rm H}$ as follows. 
\begin{itemize}
\setlength{\parskip}{0mm} 
\setlength{\itemsep}{1mm} 
\item
For each lecturer $\ell \in L$ and each pair of ordered pairs 
$(s,p), (t,q) \in \mathcal{A}(\ell)$,  
$(s,p) \succsim_{\rm H} (t,q)$ if and only if 
$s \succsim_{\ell} t$. 
\item
For each pair of distinct lecturers $\ell_1,\ell_2 \in L$ such that 
$\ell_1 < \ell_2$ and each pair of ordered pairs $e \in \mathcal{A}(\ell_1)$
and $f \in \mathcal{A}(\ell_2)$,  
$e \succ_{\rm H} f$. 
\end{itemize}
It is not difficult to see that 
for every subset $M \subseteq \mathcal{A}$, 
$M$ is a super-stable matching if and only if 
$M$ is a super-stable common independent set of 
${\bf M}_{\rm D}$ and 
${\bf M}_{\rm H}$. 

\subsubsection*{Acknowledgements} 

This work was supported by JST, PRESTO Grant Number JPMJPR1753, Japan.

\end{document}